\newtheorem{theorem}{Theorem}
\newtheorem*{lemma*}{Lemma}
\newtheorem{corollary}{Corollary}
\newtheorem{proposition}{Proposition}
\theoremstyle{definition}
\newtheorem{remark}{Remark}
\newtheorem{example}{Example}
\long\def\comment#1{}
\newcommand{\red}{\color{red}}
\renewcommand{\baselinestretch}{1.1}
\def\##1\#{\begin{align}#1\end{align}}
\def\$#1\${\begin{align*}#1\end{align*}}
\let\cite\citet
\let\hat\widehat
\let\tilde\widetilde
\def\given{{\,|\,}}
\newcommand{\xb}{\mathbf{x}}
\newcommand{\Xb}{\mathbf{X}}
\newcommand{\cA}{\mathcal{A}}
\newcommand{\cC}{\mathcal{C}}
\newcommand{\cE}{\mathcal{E}}
\newcommand{\cG}{\mathcal{G}}
\newcommand{\cI}{\mathcal{I}}
\newcommand{\cK}{\mathcal{K}}
\newcommand{\cL}{\mathcal{L}}
\newcommand{\cM}{\mathcal{M}}
\newcommand{\cO}{\mathcal{O}}
\newcommand{\cR}{\mathcal{R}}
\newcommand{\cT}{{\mathcal{T}}}
\newcommand{\cV}{\mathcal{V}}
\newcommand{\II}{\mathbb{I}}
\newcommand{\PP}{\mathbb{P}}
\newcommand{\RR}{\mathbb{R}}
\newcommand{\argmin}{\mathop{\mathrm{argmin}}}
\DeclareMathOperator{\Var}{{\rm Var}}
\definecolor{DSgray}{cmyk}{0,1,0,0}
\def\singlespace{\def\baselinestretch{1}\@normalsize}
\newcommand{\blind}{1}
\begin{document}
	
	\def\spacingset#1{\renewcommand{\baselinestretch}%
		{#1}\small\normalsize} \spacingset{1}

	\if1\blind
	{
		\title{ Ranking Inferences Based on the Top Choice of Multiway Comparisons}
		\author{  Jianqing Fan \qquad Zhipeng Lou  \qquad  Weichen Wang \qquad Mengxin Yu $^1$ }
		\date{}
		\maketitle
		\begin{singlespace}
			\begin{footnotetext}[1]
				{
					Jianqing Fan is Frederick L. Moore '18 Professor of Finance, Professor of Statistics, and Professor of Operations Research and Financial Engineering at Princeton University.  Weichen Wang is an Assistant Professor of Innovation and Information Management, Faculty of Business and Economics at The University of Hong Kong. Zhipeng Lou is a Postdoctoral Researcher at Department of Operations Research and Financial Engineering, Princeton University. Mengxin Yu is a Ph.D. student at Department of Operations Research and Financial Engineering, Princeton University, Princeton, NJ 08544, USA. 
					Emails: \texttt{\{jqfan, zlou,  mengxiny\}@princeton.edu} and \texttt{weichenw@hku.hk}.   Research supported by NSF grants DMS-2210833, DMS-2053832, DMS-2052926 and ONR grant N00014-22-1-2340				
				}
			\end{footnotetext}
			
		\end{singlespace}
	}
	\fi
	\if0\blind
	{
		\title{Ranking Inferences Based on the Top Choice of Multiway Comparisons}
		\author{}
		\date{}
		\maketitle
	} \fi

\begin{abstract}
This paper considers ranking inference of $n$ items based on the observed data on the top choice among $M$ randomly selected items at each trial.  This is a useful modification of the Plackett-Luce model for $M$-way ranking with only the top choice observed and is an extension of the celebrated Bradley-Terry-Luce model that corresponds to $M=2$.  Under a uniform sampling scheme in which any $M$ distinguished items are selected for comparisons with probability $p$ and the selected $M$ items are compared $L$ times with multinomial outcomes, we establish the statistical rates of convergence for underlying $n$ preference scores using both $\ell_2$-norm and $\ell_\infty$-norm, with the minimum sampling complexity.
In addition, we establish the asymptotic normality of the maximum likelihood estimator that allows us to construct confidence intervals for the underlying scores.  Furthermore, we propose a novel inference framework for ranking items through a sophisticated maximum pairwise difference statistic whose distribution is estimated via a valid Gaussian multiplier bootstrap.  The estimated distribution is then used to construct simultaneous confidence intervals for the differences in the preference scores and the ranks of individual items.  They also enable us to address various inference questions on the ranks of these items.  Extensive simulation studies lend further support to our theoretical results.  A real data application illustrates the usefulness of the proposed methods convincingly.
\end{abstract}

\noindent\textbf{Keyword}: Packett-Luce model, Maximum likelihood estimator, Asymptotic distribution, Rank confidence intervals, Gaussian multiplier bootstrap. 

\newpage
\spacingset{1.45} 

\pagestyle{plain}

\section{Introduction}

The problem of ranking inference from pairwise comparisons or multiple partial rankings has drew significant attention in recent years, as the ranking problem has always played an important role in many applications such as individual choices in economics \citep{luce2012individual, mcfadden1973conditional}, psychology \citep{thurstone1927method, thurstone2017law}, online and offline recommendations \citep{baltrunas2010group, li2019estimating}, and ranking of items such as journals \citep{stigler1994citation, ji2022meta}, websites \citep{dwork2001rank}, colleges and universities \citep{avery2013revealed, caron2014bayesian}, sports teams \citep{massey1997statistical, turner2012bradley}, election candidates \citep{plackett1975analysis, mattei2013preflib}, and even alleles in genetics \citep{sham1995extended}. Previously the ranking problem has mostly focused on parameter estimation and algorithm implementation; see for example \cite{furnkranz2003pairwise, negahban2012iterative, azari2013generalized, maystre2015fast, jang2018top}. In addition, a large literature of empirical studies from the above areas of research focused on incorporating individual covariates for personalization \citep{turner2012bradley, li2019estimating}. However, the ranking inference has only received more attention in the statistics community recently. 

One of the most celebrated models for ranking problems is the Bradley-Terry-Luce (BTL) model. The model is frequently used to model pairwise comparisons. Specifically, consider a large collection of $n$ items whose true ranking is determined by some unobserved preference scores $\theta_{i}^*$ for $i=1,\dots,n$, for example, qualities of products, reputations of education institutes or abilities of sports teams. The BTL model assumes that an individual or a random event ranks item $i$ over $j$ with probability $\PP(\text{item i is preferred over j}) = e^{\theta_i^*} / (e^{\theta_i^*} + e^{\theta_j^*})$. The underlying choice axiom states that this probability does not depend on other items. Moreover, for simplicity, the BLT model does not account for data heterogeneity and treats the preference of each individual or event as purely independent Bernoulli. For the theoretical study, one may assume each pair $(i,j)$ is compared with probability $p$, and once compared, they are compared for $L$ times. Given the model and the collected data of pairwise comparisons, the statistical questions are straightforward: (a) What is the optimal statistical rate of convergence for estimating $\theta_i^*$ from the data? (b) What are the proper algorithms to achieve the optimal rate? (c) What is the asymptotic distribution of an estimator $\hat\theta_i$ of $\theta_i^*$? (d) Furthermore, how can we carry out uncertainty quantification on ranks? 

Questions (a) and (b) have been clearly addressed in \cite{negahban2016rank} for the $\ell_2$-loss of estimating $\pi^* = [\pi_1^*,\dots, \pi_n^*]^\top,$ where $\pi_i^* = e^{\theta_i^*} / \sum_i e^{\theta_i^*}$. They proposed the {\it rank centrality}, an efficient iterative spectral method. \cite{chen2019spectral} further studied the estimation of $\pi^*$ under the $\ell_{\infty}$-norm. They delivered the key message that both the spectral method and the regularized maximum likelihood estimator (MLE) can achieve the optimal statistical convergence rate under the $\ell_{\infty}$-norm and the sparsest possible sampling regime ($p \gtrsim \log n/n$). \cite{chen2020partial} further complements and refines the results of \cite{chen2019spectral} by concluding that the vanilla MLE without regularization can already achieve the optimal rate of convergence in both $\ell_2$- and $\ell_\infty$- norms for estimating $\theta^* = [\theta_1^*,\dots, \theta_n^*]^\top$ and the condition for exact recovery of the top-K ranking for MLE is weaker in constant than the spectral method.

Following (a) and (b), researchers also made recent progress on addressing (c) and (d)  for the BLT model. Specifically, \cite{han2020asymptotic} made contributions to show the asymptotic normality of the MLE estimator with the sampling regime of $p \gtrsim (\log n)^{1/5}/n^{1/10}$, while \cite{gao2021uncertainty} fully revealed the asymptotic normality of both the MLE estimator and the spectral estimator under the assumption that $p \gtrsim (\log n)^{1.5} / n,$ where $1.5$ may be further improved to $1+\delta$ for arbitrary $\delta > 0$. The authors showed that although the spectral method is optimal in the order of sample complexity, it is less efficient due to its larger asymptotic variance than the MLE. Despite a great theoretical contribution to the asymptotic normality, \cite{gao2021uncertainty} did not focus too much on the ranking inference problem (d) and only treated (d) with a crude confidence interval bound. In this work, we will refine the analysis of ranking inference for the MLE and therefore fill an important gap in the literature beyond the work of
\cite{gao2021uncertainty}. 
The ranking inference for the BLT model is also studied by \cite{liu2022lagrangian}, but with a rather strong assumption of $L \gtrsim n^2 \log^2 n $, that is, each compared pair must be compared more than $C n^2$ times, which is barely possible in practical applications. In contrast, we only require $L \gtrsim \mathrm{poly}(\log n)$ in this work to carry out our rank hypothesis testing.

Another more general model that extends pairwise comparison is the Plackett-Luce (PL) model, which assumes $M$-way full ranking. In the PL model, every time an individual provides a personal ranking on all given $M$ items. Denote this full ranking as $i_1 \succ \dots \succ i_M$. It can be understood as $M-1$ independent events that $i_1$ is preferred over the set $\{i_1, \dots, i_M\}$, $i_2$ is preferred over the set $\{i_2,\dots,i_M\}$ and so forth. The model assumes that
$$
    \PP(i_1 \succ \dots \succ i_M) = \prod_{j=1}^{M-1} \bigg[e^{\theta_{i_j}^*} / \sum_{k=j}^M e^{\theta_{i_k}^*}\bigg].
$$
Similar to the BLT model, each $M$-way comparison $\{i_1, \dots, i_M\}$ is compared with probability $p$, and once compared, they are ranked for $L$ times. In practice, $L$ can be different. Nevertheless, for simplicity, in this work, we assume $L$ is a shared quantity for each compared $M$ item to ease the presentation and computation. 
The set of all $M$-way comparisons forms a comparison hyper-graph, which we will formally define later.  
When $M=2$, the PL model reduces to the BLT model. Again we could ask the same four inference questions above for the PL model. 

The PL model has garnered less attention due to the more complicated structure of multiple comparisons, although it fits in with more general and real settings, including for instance, multi-player games and personal preferences with multiple products. Research papers on the inference problems based on the PL model are relatively scarce. \cite{maystre2015fast} introduced the iterative Luce spectral ranking method and showed that it converges to the MLE without providing any statistical rate. 
\cite{jang2018top} rigorously considered conditions for the exact recovery of top-K ranking and applied the spectral method to achieve this exact recovery under the assumption that $p \gtrsim (M-1)\sqrt{\log n/\binom{n-1}{M-1}}$. However, according to \cite{cooley2016threshold}, the sparsest regime that we can have a connected hyper-graph is when $p \gtrsim \log n/\binom{n}{M-1}$. In this paper, we close this gap by showing that we can achieve optimal estimation error based on the MLE under the sparsest regime with $p \gtrsim \textrm{poly}(\log n)/\binom{n-1}{M-1}$, even in the harder situation than the traditional PL model when only top choices are observed from the $M$-way comparisons. 
Moreover, it is worth noting that all aforementioned works  on $M$-way comparisons only focused on deriving first-order statistical rates of convergence, and the corresponding asymptotic distributions have rarely been investigated. To fill in this blank, we further derive the uncertainty quantification results in this sparest regime and apply them to study the practical ranking inference. 

More specifically, instead of working on the PL model, we consider the partial-ranking case in which we only observe the top choice from the choice set $\{i_1, \dots, i_M\}$. 
This is motivated from two perspectives. On the one hand, many applications do not provide the full ranking among all the $M$ selected items, and only the top choice is known. For example, a multi-player game may stop once we get the winner; a shopper may only choose the top item to purchase after presenting a set of products. On the other hand, theoretically speaking, general $M$ with full ranking gives a likelihood function that does not provide much more insight beyond only considering the likelihood for the top choice. The theory will be more concise and intuitive regarding the role of $M$ as we will see in later sections. For $M=3$, we will also present the asymptotic normality for the PL model when the full ranking of $M=3$ items is available. For the PL model with general $M > 3$, the MLE theory can be derived similarly, but due to its more tedious notational details, we decide to omit it.

Therefore our main focus of the paper is the multiway comparison model with only the top choice observed. Under this model, we apply the MLE method and analyze its statistical rates in both $\ell_2$- and $\ell_\infty$- norms for estimating $\theta^*$ and show that they are \emph{optimal} under the sparsest hyper-graph regime. In specific, when $M=\cO(1)$, we achieve the same statistical rate as the PL model presented in \cite{jang2018top} but only requiring $p\gtrsim \textrm{poly}(\log n)/\binom{n-1}{M-1}$, even if we only observe the top choice. This answers (a) and (b). Furthermore, to respond to the question in (c) for any $M \ge 2$, we establish the asymptotic normality of the MLE $\hat\theta_i$, for all $ i\in[n]$ in the sparest regime where the sampling probability satisfies $p\gtrsim \textrm{poly}(\log n)/\binom{n-1}{M-1}$. Finally, for question (d), we address three detailed ranking inference problems: (i) constructing valid confidence intervals for the ranks of a set of items, (ii) testing if an item belongs to the top-K ranked items, (iii) providing a sure screening confidence set that contains all the top-K ranked items with high confidence. All of these are important inference questions in practice. For example, when high school seniors choose their colleges,  they often care about the confidence interval for the ranks of a few universities, whether a certain university is within the top 50, and a list of universities that contain top 50 institutes with say a 95\% confidence level. In order to complete these tasks, based on the asymptotic normality of the MLE, we propose a novel inference framework for ranking items through a sophisticated maximum pairwise difference statistic whose distribution is estimated via a valid Gaussian multiplier bootstrap \citep{CCK2017, Chernozhukov2019}. 
The estimated distribution is then used to construct simultaneous confidence intervals for the differences in the preference scores and the ranks of individual items. They also
enable us to address the above inference questions on the ranks of these items. 

Our main contributions of the work are summarized as follows. Firstly, we study the performance of the MLE on the more complicated general multiway comparison model with only top choice observed and show that MLE can achieve the optimal sample complexity under the sparsest possible regime. Secondly, we quantify the uncertainty of the MLE explicitly. Last but not least, we provide a general framework to conduct effective inference of ranks based on the Gaussian multiplier bootstrap and give answers to three crucial practical inference questions. Specifically, our proposed confidence intervals constructed for individual ranks are provably narrower than the high confidence Bonferroni adjustment in \cite{gao2021uncertainty}.

\subsection{Roadmap}

In Section \ref{sec:algo}, we set up the model  with some basic assumptions. Section \ref{sec:ranking_estimation} presents the performance of parameter estimation  and asymptotic distribution of the MLE, while Section \ref{sec:ranking_inference} details newly the proposed framework for constructing rank confidence intervals, rank testing statistics, and top-K sure screening confidence set. Section \ref{sec:numerical} contains comprehensive numerical studies to verify theoretical results and a real data example to illustrate the usefulness of our ranking inference methods. Finally we conclude the paper with some discussions in Section \ref{sec:discussion}. All the proofs are deferred to the appendix.

\subsection{Notation}
Throughout this work, we use $[n]$ to denote the index set $\{1,2,\cdots,n\}.$ For any given vector $\xb\in \RR^{n}$ and $q\ge 0$, we use $\|\xb\|_{q}=(\sum_{i=1}^{n}|x_i|^q)^{1/q}$ to represent the vector $\ell_q$ norm.  For any given matrix $\Xb\in \RR^{d_1\times d_2}$, we use $\|\cdot\|$ to denote the spectral norm of $\Xb$ and  write $\Xb\succcurlyeq 0$ or $\Xb\preccurlyeq 0$ if $\Xb$ or $-\Xb$ is positive semidefinite.  For event $A$, $\II_{A}$ denotes an indicator random variable which equals $1$ if $A$ is true and $0$ otherwise. In addition, we let $\nabla L(\cdot),\nabla^2L(\cdot)$ be the gradient and Hessian of a loss function $L(\cdot)$. 
For two positive sequences $\{a_n\}_{n\ge 1}$, $\{b_n\}_{n\ge 1}$, we write $a_n=\cO(b_n)$ or $a_n\lesssim b_n$ if there exists a positive constant $C$ such that $a_n\le C\cdot b_n$ and we write $a_n=o(b_n)$ if $a_n/b_n\rightarrow 0$. Similarly we have $a_n=\Omega(b_n)$ or $a_n\gtrsim b_n$ if $a_n/b_n\ge c$ with some constant $c>0$. We use $a_n=\Theta(b_n)$ (or $a_n\asymp b_n$) if $a_n=\cO(b_n)$ and $a_n=\Omega(b_n)$.  Given $n$ items, we use $\theta_i^*$ to indicate the underlying preference score of the $i$-th item. 
Define $r : [n] \to [n]$ as the rank operator on the $n$ items which maps each item to its population rank based on the preference scores. We write the rank of the $i$-th item as $r_i$ or $r(i)$. By default, we consider ranking from the largest score to the smallest score.

\section{Multiway Comparison Model} \label{sec:algo}
We first introduce the formulation of the ranking problem for the multiway comparison model. The model consists of three key components. 

\begin{itemize}
\item \textbf{Preference scores}: For a given group of $n$ items,  they are associated $n$ preference scores 
\begin{align*}
    \theta^*=[\theta_1^*,\cdots,\theta_n^*]^\top, 
\end{align*}
which are assumed to fall within a range, 
\begin{align*}
    \theta_i^*\in[\theta_{L},\theta_{U}],\,\,\forall i\in[n].
\end{align*}
with the condition number 
$
    \kappa:=\theta_U-\theta_L. 
$
This paper considers the case where $\kappa$ is a fixed constant independent of $n$. This represents a more challenging scenario in which all items under comparison have preference scores in the same order. Otherwise, we could apply a simple screening to easily differentiate obvious winners or losers and redo the analysis within only items with scores of the same order. 

\item \textbf{Comparison hypergraph}: Let $\cG=(\cV,\cE)$ be a comparison hypergraph, where the vertex set $\cV=\{1,2,\cdots, n\}$ denotes the $n$ items of interest. $M$ different items $(i_1,\cdots,i_M)$ are compared if $(i_1,\cdots,i_M)$ falls within the edge set $\cE$ of a $M$-way hypergraph. We assume a hyper-edge connecting any set $(i_1,\cdots,i_M)$ of size $M$ with probability $p$. 
Let $A_{i_1\cdots i_{M}}$ take value 1 if item set  $(i_1,\cdots,i_{M})$ is compared and 0 otherwise. Then it is a sequence of realizations from independent Bernoulli random variables with parameter $p$ representing whether $(i_1,\cdots,i_{M})$ is connected in the hypergraph. When $M=2$, the hypergraph becomes the well-known Erdos-Renyi graph. 

\item \textbf{Multinomial Comparisons:} For each $(i_1,\cdots,i_M)\in \cE$, we observe $L$ independent comparisons among items in $\{i_1,\cdots,i_{M}\}$ and let $\{y_{i_1}^{(\ell)},\cdots,y^{(\ell)}_{i_M}\}$ be the $\ell$-th outcome of the  comparison. If the most preferred item is $i_k$, then $y_{i_k}^{(\ell)} = 1$ and the others are zero. Thus, for each $\ell \in [L]$, $\{y_{i_k}^{(\ell)},k\in[M]\}$ follows the multinomial distribution independently with probability
\begin{align*}\Big \{p_{i_k}=\frac{e^{\theta_{i_{k}}^*}}{\sum_{j=1}^{M}e^{\theta_{i_j}^*}},k\in[M]\Big\}. \end{align*}
Further define $y_{i_k}=\frac{1}{L}\sum_{l=1}^{L}y_{i_k}^{(\ell)}$. We also denote $y_{i_k}^{(\ell)}$ as $y_{i_k,(i_1,\cdots,i_{k-1},i_{k+1},\cdots,i_M)}^{(\ell)}$ when we need to emphasize that $i_k$ is preferred over the remaining items $\{i_1,\cdots,i_{k-1},i_{k+1},\cdots,i_M\}$. But we prefer the shorter notation when it is clear from the context on the comparison set. 

\end{itemize}

Throughout the paper, we consider $M=\cO(1)$. This assumption is trivially satisfied in many practical applications. For example, in a multi-player game, the number of teams or contestants who compete with each other in every game is typically a fixed number or has a fixed upper bound. A student who faces the selection of education programs may only get offers from a few institutes. In a recommendation system such as an online shopping platform, due to the limited space of a webpage, only a fixed number of items can be exhibited on the first page, and a shopper may seldom turn to the second page to make the purchasing decision. Moreover, in all these examples, we typically only have access to the most preferred item instead of knowing the full ranking of all $M$ items. Even when we observe the full ranking, it may not be trusted as much as the top preference due to the challenges to give full ranking of multiple items.

This paper aims to provide the statistical estimation and uncertainty quantification of the underlying scores of all items. More importantly, we study statistical inference for ranks, which is very much underdeveloped for the multiway comparison model. 

\section{Estimation and Uncertainty Quantification}\label{sec:ranking_estimation}
In this section, we utilize the MLE to derive an estimator for the underlying scores $\{\theta_i^*\}_{i=1}^{n}$ of $n$ items and establish the statistical convergence rates and asymptotic normality.

\subsection{Statistical Estimation}\label{sec:mle_con} 

The negative-log-likelihood function for our multiway comparison model is given by
\begin{align}
    \ell_n(\theta)=-\sum_{i_1\neq\cdots\neq i_M}A_{i_1\cdots i_M}\bigg[\sum_{k=1}^{M}y_{i_k}\log \bigg(\frac{e^{\theta_{i_k}}}{\sum_{j=1}^{M} e^{\theta_{i_j}} }\bigg)\bigg].\label{likelihood_main}
\end{align}
 Here the expression \eqref{likelihood_main} is mainly for the purpose of theoretical analysis. For computation of the MLE, the first summation is over all trials of multiway comparisons and the second sum has only one non-vanishing term.

From the above likelihood function, $\theta^*$ is only identifiable up to additive shift. We assume $\mathbf{1}^\top\theta^*=0$ for model identifiability. Thus, the parameter space for $\theta^{*}$ is the following $\Theta(\kappa)$ for some positive fixed constant $\kappa < \infty$, where 
\begin{align}\label{eq_bound_theta_kappa}
    \Theta(\kappa) = \left\{\theta \in \mathbb{R}^{n} : \max_{1\leq m\leq n} \theta_{m} - \min_{1\leq m\leq n} \theta_{m} \leq \kappa \enspace \mathrm{and} \enspace \mathbf{1}^{\top} \theta = 0\right\}.
\end{align}
Thus, the MLE is given by 
\begin{align}\label{estimator}
    \hat\theta =\argmin_{\mathbf{1}^\top \theta=0}\ell_n(\theta).
\end{align}
The next theorem gives the rate of convergence for $\hat\theta.$

\begin{theorem}\label{thm_consistency}
If $p \gtrsim \textrm{poly}(\log n)/\binom{n-1}{M-1}$, then the MLE defined in \eqref{estimator} satisfies
\begin{align}
     \|\hat\theta-\theta^{*}\|^2_{2}&\lesssim \frac{n}{\binom{n-1}{M-1}pL},\label{l_2_consist}\\
    \|\hat\theta-\theta^{*}\|^2_{\infty}&\lesssim \frac{\log n}{\binom{n-1}{M-1}pL}.\label{l_infty_consist}
\end{align}
\end{theorem}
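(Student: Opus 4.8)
The plan is to follow the now-standard "leave-one-out" strategy for $\ell_\infty$-analysis of the MLE in comparison models, as developed for the BTL model by \cite{chen2019spectral, chen2020partial} and adapted to multiway comparisons. First I would record the geometry of the log-likelihood: compute $\nabla \ell_n(\theta^*)$ and $\nabla^2 \ell_n(\theta)$, and show that on the parameter space $\Theta(\kappa)$ the Hessian, restricted to the subspace $\{\mathbf 1^\top v = 0\}$, satisfies a restricted strong-convexity bound of the form $\nabla^2 \ell_n(\theta) \succcurlyeq c\, p L \binom{n-1}{M-1} \cdot (\mathbf I - n^{-1}\mathbf 1\mathbf 1^\top) / n$ with high probability. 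This is where the hypergraph Laplacian concentration enters: the Hessian is (up to the $\kappa$-dependent constants from $\min_k p_{i_k}$) a weighted Laplacian of the random $M$-uniform hypergraph, and its spectral gap concentrates around $p\binom{n-1}{M-1}$ once $p \binom{n-1}{M-1} \gtrsim \log n$, by a matrix-Bernstein / matrix-Chernoff argument together with a union bound over a net of $\Theta(\kappa)$. Second, I would bound $\|\nabla \ell_n(\theta^*)\|_2$ and $\|\nabla \ell_n(\theta^*)\|_\infty$: each coordinate of the score is a sum over incident hyperedges of mean-zero bounded increments $y_{i_k} - p_{i_k}$, so Bernstein gives $\|\nabla \ell_n(\theta^*)\|_\infty \lesssim \sqrt{p\binom{n-1}{M-1}\log n / L}$ and $\|\nabla \ell_n(\theta^*)\|_2 \lesssim \sqrt{n p \binom{n-1}{M-1}/L}$. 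Combining with strong convexity via the basic inequality $\ell_n(\hat\theta)\le\ell_n(\theta^*)$ (or a first-order optimality argument) yields the $\ell_2$ bound \eqref{l_2_consist} directly.

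For the $\ell_\infty$ bound \eqref{l_infty_consist} I would introduce, for each $m\in[n]$, the leave-one-out estimator $\hat\theta^{(m)}$ that is the MLE of a surrogate likelihood in which all hyperedges incident to $m$ have their data replaced by their conditional expectations given $\theta^*$ (so that $\hat\theta^{(m)}$ is independent of the comparison outcomes involving item $m$). The argument then has three pieces: (i) a perturbation bound showing $\|\hat\theta - \hat\theta^{(m)}\|_2$ is small — of the order of the per-item gradient contribution, $\sqrt{\binom{n-2}{M-2} p \log n / L}/(p\binom{n-1}{M-1})$ times a constant — obtained by comparing the two optimality conditions and using strong convexity; (ii) an $\ell_\infty$ bound on $\hat\theta^{(m)} - \theta^*$ in its $m$-th coordinate, using that $\hat\theta^{(m)}_m$ depends on the randomness at $m$ only through the already-controlled $\|\hat\theta^{(m)} - \theta^*\|_2$ plus a single mean-zero term that can be bounded by Bernstein conditionally on $\hat\theta^{(m)}$; and (iii) the triangle inequality $|\hat\theta_m - \theta^*_m| \le |\hat\theta_m - \hat\theta^{(m)}_m| + |\hat\theta^{(m)}_m - \theta^*_m|$, followed by a union bound over $m\in[n]$. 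Care is needed because $\|\hat\theta - \hat\theta^{(m)}\|_\infty$, not just $\|\cdot\|_2$, is what feeds back into the bound, so one typically runs a bootstrapping/induction over the noise level, or equivalently shows the map $\theta \mapsto$ (one MLE step) contracts the relevant $\ell_\infty$-ball; I would set this up as in \cite{chen2020partial}.

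The main obstacle I anticipate is the combinatorial bookkeeping specific to $M$-way (rather than pairwise) comparisons: each hyperedge $\{i_1,\dots,i_M\}$ contributes to $M$ coordinates of the gradient and to an $M\times M$ block of the Hessian, and the relevant degree parameter is $\binom{n-1}{M-1}$ (hyperedges through a fixed vertex) while pairwise co-occurrence counts are $\binom{n-2}{M-2}$. Getting the Hessian concentration right requires controlling sums of these rank-$\le M$ blocks — a matrix-Bernstein bound where the variance proxy is $\cO(p\binom{n-1}{M-1})$ and the almost-sure bound is $\cO(M)=\cO(1)$ — and checking that the "graph Laplacian" $\sum_{e} (\text{block}_e)$ has its nonzero spectrum bounded below by $\asymp p\binom{n-1}{M-1}$; the condition $p \gtrsim \mathrm{poly}(\log n)/\binom{n-1}{M-1}$ is exactly what makes this work, and I would track how the power of $\log n$ in $\mathrm{poly}(\log n)$ is consumed by (a) the Hessian net argument, (b) the leave-one-out union bound, and (c) the requirement $L\gtrsim 1$ (no extra $L$-dependence beyond $1/L$ scaling). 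All $\kappa$-dependence is absorbed into constants since $e^{\theta}$ and the multinomial probabilities $p_{i_k}$ are bounded above and below by constants depending only on $\kappa$.
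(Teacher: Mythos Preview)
Your proposal is correct and follows essentially the same route as the paper's proof (deferred to the supplement): restricted strong convexity of $\ell_n$ via concentration of the random hypergraph Laplacian, combined with Bernstein bounds on $\nabla\ell_n(\theta^*)$, gives~\eqref{l_2_consist}, and the leave-one-out construction of \cite{chen2020partial}, adapted to $M$-way hyperedges exactly as you outline, gives~\eqref{l_infty_consist}. One scaling slip to fix before you write it out: with the paper's normalization of $\ell_n$ (the $y_{i_k}$ in~\eqref{likelihood_main} are already averaged over $L$), the Hessian carries no factor of $L$ and its smallest nonzero eigenvalue on $\{\mathbf 1^\top v=0\}$ is $\asymp p\binom{n-1}{M-1}$, not $pL\binom{n-1}{M-1}/n$; with this correction your gradient and curvature bounds combine to the stated rates.
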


Theorem \ref{thm_consistency} presents the $\ell_2$- and $\ell_{\infty}$- statistical convergence rates for $\hat\theta$ when one chooses the most preferred item among $M$ given items.  This coincides with the best rate one can hope for if we ignore the logarithmic term. To understand this from the information perspective, note that the parameter $\theta_i$ appears only in the comparisons when item $i$ is involved and the expected number of comparisons involving item $i$ is $\binom{n-1}{M-1}pL$ for all $i\in [n]$. 
Therefore, the best estimation error of $\theta_i$ we can achieve is $\cO(({\binom{n-1}{M-1}pL})^{-1/2})$ for all $i\in [n]$, which matches the obtained bound for $\|\hat\theta-\theta^*\|_{\infty}$ if we ignore the logarithmic term.  It is also worth mentioning that when $M=2$, our model reduces to the well-known BTL model. The $\ell_2$- and $\ell_{\infty}$- statistical rates also match those in estimating BTL model that are optimal up to logarithm terms \citep{chen2019spectral, chen2020partial}.

We hope to point out that when $M=\cO(1)$, our $\ell_2$- and $\ell_{\infty}$- statistical rates are identical to those in \cite{jang2018top} for the $M$-way comparisons in the PL model via the spectral method. This reveals that only picking the top item, instead of full ranking of all $M$ items, is sufficient to recover the underlying scores with same order of accuracy.  In addition, note that the hypergraph with edge size of $M$ items is connected with high probability when $p\gtrsim \log n/\binom{n}{M-1}$ by \cite{cooley2016threshold}; otherwise there will be isolated points and the corresponding items are never ranked. 
Our assumption on the sampling probability $p \gtrsim \textrm{poly}(\log n)/\binom{n-1}{M-1}$ matches the lower bound on sampling probability up to logarithmic terms, whereas \cite{jang2018top} requires $p \gtrsim (\log n/\binom{n-1}{M-1})^{1/2}$, a order of magnitude larger than ours.

The next corollary provides the conditions on the recovery of the top-$K$ items when there exists a gap between the scores of the true $K$-th and $(K+1)$-th items. 

\begin{corollary}\label{top-k-ranking} Under the conditions of Theorem \ref{thm_consistency}, if we have $\theta^*\in\Theta(\kappa)$ with $\theta^*_{(K)}-\theta^*_{(K+1)}\ge \Delta,$ we are able to recover the true top-$K$ items when the sample complexity satisfies 
\begin{align*}
\binom{n-1}{M-1}pL\gtrsim \Delta^{-2}\cdot \log n.
\end{align*}
Here $\theta_{(i)}^*$ denotes the underlying score of the item with true rank $i$ for $i\in[n]$.
\end{corollary}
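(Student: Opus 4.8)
The plan is to obtain Corollary \ref{top-k-ranking} as an immediate consequence of the sup-norm rate \eqref{l_infty_consist} in Theorem \ref{thm_consistency}, combined with an elementary order-preservation argument. The guiding principle is that exact identification of the top-$K$ set is guaranteed as soon as the $\ell_\infty$-estimation error is strictly below half the score gap.

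First I would set up notation and record the order-preservation step. Let $\cS_K^* = \{i \in [n] : r(i) \le K\}$ be the true top-$K$ set, and let $\widehat{\cS}_K$ be the index set of the $K$ largest coordinates of $\hat\theta$. Assume $\|\hat\theta - \theta^*\|_\infty < \Delta/2$. Then for any $i \in \cS_K^*$ and $j \notin \cS_K^*$,
\[
\hat\theta_i > \theta_i^* - \frac{\Delta}{2} \ge \theta^*_{(K)} - \frac{\Delta}{2} \ge \theta^*_{(K+1)} + \frac{\Delta}{2} \ge \theta_j^* + \frac{\Delta}{2} > \hat\theta_j ,
\]
where the central inequality uses the gap assumption $\theta^*_{(K)} - \theta^*_{(K+1)} \ge \Delta$, while the remaining steps use $\theta^*_i \ge \theta^*_{(K)}$ (since $r(i)\le K$), $\theta^*_j \le \theta^*_{(K+1)}$ (since $r(j)\ge K+1$), and $\|\hat\theta - \theta^*\|_\infty < \Delta/2$. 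Hence every coordinate of $\hat\theta$ in $\cS_K^*$ strictly exceeds every coordinate in its complement, which forces $\widehat{\cS}_K = \cS_K^*$; that is, the top-$K$ set is recovered exactly. (This recovers the set, not its internal order, which would additionally require gaps between consecutive scores inside the top block.)

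Next I would invoke \eqref{l_infty_consist}: under the sampling condition of Theorem \ref{thm_consistency}, with high probability $\|\hat\theta - \theta^*\|_\infty^2 \le C \log n / \big(\binom{n-1}{M-1} p L\big)$ for a constant $C$ depending only on $\kappa$ and $M$. Therefore, if the sample complexity obeys $\binom{n-1}{M-1} p L \ge 4C\Delta^{-2} \log n$, i.e. $\binom{n-1}{M-1} p L \gtrsim \Delta^{-2} \log n$ with a sufficiently large absolute constant, then $\|\hat\theta - \theta^*\|_\infty < \Delta/2$ on the same high-probability event, and the order-preservation step above delivers $\widehat{\cS}_K = \cS_K^*$.

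I do not anticipate a substantive obstacle: the argument is essentially a plug-in of \eqref{l_infty_consist}. The only points needing mild care are (i) matching the hidden constant in the sample-complexity requirement to the constant $C$ from the $\ell_\infty$-rate, so that $\{\|\hat\theta-\theta^*\|_\infty < \Delta/2\}$ holds with the same high probability asserted in Theorem \ref{thm_consistency}; and (ii) checking compatibility with the baseline requirement $p \gtrsim \mathrm{poly}(\log n)/\binom{n-1}{M-1}$ — when $\Delta \asymp 1$ the new condition reduces to $\binom{n-1}{M-1}pL \gtrsim \log n$ and is implied by the baseline one provided $L \gtrsim \mathrm{poly}(\log n)$, while for vanishing $\Delta$ the displayed condition is the binding one.
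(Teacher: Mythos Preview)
Your proposal is correct and matches the paper's approach: the paper simply states that the corollary ``follows directly from Theorem~\ref{thm_consistency},'' and your order-preservation argument (showing $\|\hat\theta-\theta^*\|_\infty<\Delta/2$ forces exact recovery of the top-$K$ set) is precisely the elementary step this phrase is pointing to.
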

This corollary follows directly from Theorem \ref{thm_consistency}. Note that when $M=2$, the requirement for sample complexity exactly reduces to that in \cite{chen2019spectral}, which is minimax optimal up to logarithm factors.
\subsection{Uncertainty Quantification}
This subsection aims at providing uncertainty quantification of estimator $\hat\theta$. We follow the idea proposed in \cite{gao2021uncertainty}, depicting the asymptotic behavior of every element of $\hat\theta$ via the likelihood function.

Before proceeding, we first separate out the likelihood terms involving the $m$-th entry $\theta_{m}$.  Fixing  other components $\theta_{-m}=\{\theta_i: i \ne m\}$, we define
\begin{align}\label{m_out_loss}
     \ell_n^{(m)}(\theta_m\given \theta_{-m})&=-M\sum_{(i_1\neq\cdots\neq i_{M-1})\neq m}A_{i_1\cdots i_{M-1}m}\bigg[\sum_{k=1}^{M-1}y_{i_k}\log \bigg(\frac{e^{\theta_{i_k}}}{\sum_{j=1}^{M-1} e^{\theta_{i_j}}+e^{\theta_{m}} }\bigg)\\&\qquad+y_{m}\log\bigg(\frac{e^{\theta_{m}}}{\sum_{j=1}^{M-1} e^{\theta_{i_j}}+e^{\theta_{m}} }\bigg)\bigg].\nonumber
\end{align}
Again this includes all terms in the likelihood function \eqref{likelihood_main} that has the information of $\theta_m$.  Let $f^{(m)}(\theta_m\given \theta_{-m})$ be the gradient of $\ell_n^{(m)}(\theta_m\given \theta_{-m})$ w.r.t. $\theta_{m}$, which is given by
\begin{align}
\label{f_m_main}
    f^{(m)}(\theta_m\given \theta_{-m})=M\sum_{(i_1\neq\cdots\neq i_{M-1})\neq m}A_{i_1\cdots i_{M-1}m}\bigg\{\frac{e^{\theta_m}}{\sum_{j=1}^{M-1}e^{\theta_{i_j}}+e^{\theta_m}}-y_{m} \bigg\}.
\end{align}
In addition, we also define  $g^{(m)}(\theta_m\given \theta_{-m})$ as the second derivative of $\ell_n^{(m)}(\theta_m\given \theta_{-m})$ w.r.t. $\theta_{m}$,  which is given by
\begin{align}
    g^{(m)}(\theta_m\given \theta_{-m})=M\sum_{(i_1\neq\cdots\neq i_{M-1})\neq m}A_{i_1\cdots i_{M-1}m}\bigg\{\sum_{j=1}^{M-1}\frac{e^{\theta_m+\theta_{i_j}}}{(\sum_{j=1}^{M-1}e^{\theta_{i_j}}+e^{\theta_m})^2} \bigg\}.\label{g_m_main}
\end{align}
Then, to maximize the likelihood, $\hat\theta_m$ must be the minimzer of $\ell_{n}^{(m)}(\theta\given \hat\theta_{-m})$.
By Taylor expansion of $\ell_{n}^{(m)}(\theta\given \hat\theta_{-m})$, a good proxy to $\hat\theta_{m}-\theta^{*}_m $ is its score function $-\frac{f^{(m)}(\theta_m^{*}\given \hat\theta_{-m})}{g^{(m)}(\theta_m^{*}\given \hat\theta_{-m})}$, which is approximately the same as $-\frac{f^{(m)}(\theta_m^{*}\given \theta_{-m}^{*})}{g^{(m)}(\theta_m^{*}\given \theta_{-m}^{*})}$. 
This leads us to consider the heuristic expression
\begin{align}\label{mle_decomp}
    \hat\theta_{m}-\theta^{*}_m=-\frac{f^{(m)}(\theta_m^{*}\given \theta_{-m}^{*})}{g^{(m)}(\theta_m^{*}\given \theta_{-m}^{*})} +\delta_m,
\end{align}
where we expect $\delta_m$ to be of smaller order.
The asymptotic distribution of $\hat\theta_{m}-\theta_m^{*}$ will then follow from 
that of $-\frac{f^{(m)}(\theta_m^{*}\given \theta_{-m}^{*})}{g^{(m)}(\theta_m^{*}\given \theta_{-m}^{*})}$. The following theorem makes the above heuristic discussion rigorous.

\begin{theorem}\label{thm_inference_k3}
If $p \gtrsim \textrm{poly}(\log n)/\binom{n-1}{M-1}$,  the MLE defined in \eqref{estimator} enjoys
\begin{align*}
\hat\theta_{m}-\theta^{*}_m=-\frac{f^{(m)}(\theta_m^{*}\given \theta_{-m}^{*})}{g^{(m)}(\theta_m^{*}\given \theta_{-m}^{*})} +\delta_m,
    \end{align*}
for all $m\in[n]$    with $
    \|\delta\|_{\infty}=o(({1}/{\binom{n-1}{M-1}pL})^{1/2})$ where $\delta=(\delta_1,\cdots,\delta_n)$. 
In addition, 
\begin{align*}
   \rho_m(\theta) (\hat\theta_m-\theta_m^*)\rightarrow N(0,1),
\end{align*}
for all $m\in[n]$ with
\begin{align}\label{eq_rho_theta}
 \rho_m(\theta)&=\bigg[\frac{L}{(M-1)!}\sum_{(i_1\neq\cdots\neq i_{M-1})\neq m}A_{i_1\cdots i_{M-1}m}\bigg\{\sum_{j=1}^{M-1}\frac{e^{\theta_m+\theta_{i_j}}}{(\sum_{j=1}^{M-1}e^{\theta_{i_j}}+e^{\theta_m})^2} \bigg\}\bigg]^{1/2},
\end{align}
for both $\theta\in\{\theta^*,\hat\theta\}.$
\end{theorem}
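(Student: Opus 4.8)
The plan is to make the linearization in \eqref{mle_decomp} rigorous and then deduce the Gaussian limit from a central limit theorem for the score $f^{(m)}(\theta_m^{*}\given\theta_{-m}^{*})$ applied conditionally on the comparison hypergraph $\cG$. First, because $\ell_n$ is invariant under translating $\theta$ by a constant vector, $\mathbf 1^{\top}\nabla\ell_n(\theta)\equiv 0$, so the Lagrange multiplier of the constraint $\mathbf 1^{\top}\theta=0$ vanishes and $f^{(m)}(\hat\theta_m\given\hat\theta_{-m})=\nabla_{\theta_m}\ell_n(\hat\theta)=0$ for all $m$. A one-variable Taylor expansion of $\theta_m\mapsto f^{(m)}(\theta_m\given\hat\theta_{-m})$ about $\theta_m^{*}$ then gives $\hat\theta_m-\theta_m^{*}=-f^{(m)}(\theta_m^{*}\given\hat\theta_{-m})/g^{(m)}(\xi_m\given\hat\theta_{-m})$ for some $\xi_m$ between $\hat\theta_m$ and $\theta_m^{*}$. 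The denominator is routine: the summand $\sum_j e^{\theta_m+\theta_{i_j}}/(\sum_j e^{\theta_{i_j}}+e^{\theta_m})^2$ equals $p_m(1-p_m)$, which on $\Theta(\kappa)$ is a smooth function of the coordinates bounded above and below; so a Lipschitz bound, the rate $\|\hat\theta-\theta^{*}\|_\infty=o(1)$ from Theorem~\ref{thm_consistency}, and a Bernstein bound for the degree sum $\sum_{(i_1\neq\cdots\neq i_{M-1})\neq m}A_{i_1\cdots i_{M-1}m}$ (sharp since $\binom{n-1}{M-1}p\gtrsim\mathrm{poly}(\log n)$) yield $g^{(m)}(\xi_m\given\hat\theta_{-m})=g^{(m)}(\theta_m^{*}\given\theta_{-m}^{*})(1+o(1))$ with $g^{(m)}(\theta_m^{*}\given\theta_{-m}^{*})\asymp\binom{n-1}{M-1}p$, uniformly in $m$.

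The crux, and the part I expect to be hardest, is the numerator: showing $f^{(m)}(\theta_m^{*}\given\hat\theta_{-m})-f^{(m)}(\theta_m^{*}\given\theta_{-m}^{*})=o\big((\binom{n-1}{M-1}p/L)^{1/2}\big)$ uniformly in $m$, for which the naive bound $\|\nabla_{\theta_{-m}}f^{(m)}\|_2\|\hat\theta-\theta^{*}\|_2$ is off by a polynomial factor. The approach is a leave-one-out argument. Let $\hat\theta^{(m)}$ be the MLE formed after deleting every hyperedge through item $m$ (well defined on fixing $\hat\theta^{(m)}_m=\theta_m^{*}$ and $\mathbf 1^{\top}\hat\theta^{(m)}=0$); it obeys the Theorem~\ref{thm_consistency} rates on the sub-hypergraph, satisfies a stationarity identity for the remaining loss $\ell_n^{\mathrm{out}(m)}$, and $\hat\theta^{(m)}_{-m}$ is independent of the edges and outcomes incident to $m$. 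Split the error into (i) $f^{(m)}(\theta_m^{*}\given\hat\theta_{-m})-f^{(m)}(\theta_m^{*}\given\hat\theta^{(m)}_{-m})$, bounded by a gradient estimate for $f^{(m)}$ times the leave-one-out stability bound $\|\hat\theta-\hat\theta^{(m)}\|_2\lesssim\mathrm{poly}(\log n)(\binom{n-1}{M-1}pL)^{-1/2}$ (from restricted strong convexity of $\ell_n$ on $\mathbf 1^{\perp}$, curvature $\gtrsim\binom{n-1}{M-1}p$); and (ii) $f^{(m)}(\theta_m^{*}\given\hat\theta^{(m)}_{-m})-f^{(m)}(\theta_m^{*}\given\theta_{-m}^{*})=\langle\nabla_{\theta_{-m}}f^{(m)}(\theta_m^{*}\given\tilde\theta_{-m}),\,\hat\theta^{(m)}_{-m}-\theta_{-m}^{*}\rangle$. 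For (ii), linearize $\hat\theta^{(m)}_{-m}-\theta_{-m}^{*}=-u+(\text{higher order})$ with $u=[\bH^{(m)}]^{-1}\nabla_{\theta_{-m}}\ell_n^{\mathrm{out}(m)}(\theta^{*})$, $\bH^{(m)}=\nabla^2_{\theta_{-m}}\ell_n^{\mathrm{out}(m)}(\theta^{*})$, and condition on the data away from $m$: then $u$ is frozen while $\nabla_{\theta_{-m}}f^{(m)}(\theta_m^{*}\given\theta_{-m}^{*})$ is a sum of independent, bounded, $\mathrm{Bernoulli}(p)$-weighted terms in the edge indicators incident to $m$, so Bernstein controls its deviation from the mean $-Mp\sum_{i\neq m}u_iW_i^{*}$ (fixed weights $W_i^{*}\asymp\binom{n-2}{M-2}$). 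The mean is disposed of in two steps: writing $W_i^{*}=\bar W^{*}+(W_i^{*}-\bar W^{*})$, the constant part annihilates because $\mathbf 1^{\top}u=0$ (the score $\nabla_{\theta_{-m}}\ell_n^{\mathrm{out}(m)}(\theta^{*})$ is $\perp\mathbf 1$ by translation invariance, and the restricted inverse Hessian preserves $\mathbf 1^{\perp}$), while the fluctuating part is small because $u$ is delocalized: by the Fisher-information identity $\Cov(u\given\cG)=L^{-1}[\bH^{(m)}]^{-1}$, of operator norm $\lesssim(L\binom{n-1}{M-1}p)^{-1}$ by the restricted spectral gap of the hypergraph Hessian used in the proof of Theorem~\ref{thm_consistency}. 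Tracking constants, each piece is $o\big((\binom{n-1}{M-1}p/L)^{1/2}\big)$ once the $\log n$-power in the sampling assumption is large enough, and a union bound over $m\in[n]$ gives $\|\delta\|_\infty=o\big((\binom{n-1}{M-1}pL)^{-1/2}\big)$.

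For the Gaussian limit, note $f^{(m)}(\theta_m^{*}\given\theta_{-m}^{*})=\nabla_{\theta_m}\ell_n(\theta^{*})=M\sum_{(i_1\neq\cdots\neq i_{M-1})\neq m}A_{i_1\cdots i_{M-1}m}\{e^{\theta_m^{*}}/(\sum_j e^{\theta_{i_j}^{*}}+e^{\theta_m^{*}})-y_m\}$ is, conditionally on $\cG$, a sum over distinct comparison sets of independent, bounded, mean-zero summands; its Lyapunov ratio is $\asymp(\binom{n-1}{M-1}p)^{-1/2}\to0$, so $f^{(m)}(\theta_m^{*}\given\theta_{-m}^{*})/\sqrt{\Var(f^{(m)}(\theta_m^{*}\given\theta_{-m}^{*})\given\cG)}\to N(0,1)$ for almost every realization of $\cG$, hence unconditionally. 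Using $\sum_j e^{\theta_m+\theta_{i_j}}/(\sum_j e^{\theta_{i_j}}+e^{\theta_m})^2=p_m(1-p_m)$ one checks the algebraic identity $\rho_m(\theta^{*})^2=g^{(m)}(\theta_m^{*}\given\theta_{-m}^{*})^2/\Var(f^{(m)}(\theta_m^{*}\given\theta_{-m}^{*})\given\cG)$; combining with the two previous paragraphs,
\[
\rho_m(\theta^{*})(\hat\theta_m-\theta_m^{*})=-\frac{f^{(m)}(\theta_m^{*}\given\theta_{-m}^{*})}{\sqrt{\Var(f^{(m)}(\theta_m^{*}\given\theta_{-m}^{*})\given\cG)}}\,(1+o(1))+\rho_m(\theta^{*})\delta_m\longrightarrow N(0,1),
\]
since $\rho_m(\theta^{*})\asymp(L\binom{n-1}{M-1}p)^{1/2}$ and $\|\delta\|_\infty=o((L\binom{n-1}{M-1}p)^{-1/2})$. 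Finally, Lipschitzness of $p_m(1-p_m)$ together with $\|\hat\theta-\theta^{*}\|_\infty=o(1)$ and $\rho_m(\theta^{*})^2\asymp L\binom{n-1}{M-1}p$ gives $\rho_m(\hat\theta)^2/\rho_m(\theta^{*})^2=1+o(1)$, so Slutsky's theorem upgrades the limit to $\rho_m(\hat\theta)(\hat\theta_m-\theta_m^{*})\to N(0,1)$ as well.
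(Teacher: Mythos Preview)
Your proposal is correct and follows essentially the same leave-one-out strategy that the paper adopts from \cite{gao2021uncertainty}: reduce to the stationarity equation $f^{(m)}(\hat\theta_m\given\hat\theta_{-m})=0$, Taylor-expand, control the denominator by Lipschitzness plus degree concentration, and handle the numerator perturbation $f^{(m)}(\theta_m^{*}\given\hat\theta_{-m})-f^{(m)}(\theta_m^{*}\given\theta_{-m}^{*})$ via the surrogate $\hat\theta^{(m)}$, exploiting its independence from the edges incident to $m$. The only substantive nuance worth flagging is your treatment of the conditional mean in step~(ii): after Bernstein disposes of the fluctuation, you linearize $\hat\theta^{(m)}_{-m}-\theta^{*}_{-m}\approx -u$ and bound $(W^{*}-\bar W^{*})^{\top}u$ probabilistically through $\Cov(u\given\cG)\preceq (L\binom{n-1}{M-1}p)^{-1}I$, which buys you the crucial extra $n^{-1/2}$ factor that a naive Cauchy--Schwarz with $\|W^{*}-\bar W^{*}\|_2\|\hat\theta^{(m)}-\theta^{*}\|_2$ cannot. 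This is a clean way to close the gap; just be sure when writing it out to (a) first replace $\nabla_{\theta_{-m}}f^{(m)}(\theta_m^{*}\given\tilde\theta_{-m})$ by $\nabla_{\theta_{-m}}f^{(m)}(\theta_m^{*}\given\theta_{-m}^{*})$ so that the weights $W_i^{*}$ are deterministic when you uncondition, and (b) verify that the higher-order remainder from the linearization of $\hat\theta^{(m)}-\theta^{*}$ stays in $\mathbf 1^{\perp}$ and is negligible in $\ell_2$. Minor slip: your Lyapunov ratio should be $(\binom{n-1}{M-1}pL)^{-1/2}$ rather than $(\binom{n-1}{M-1}p)^{-1/2}$, once you index the summands by $(e,\ell)$ pairs; the conclusion is unaffected.
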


Theorem \ref{thm_inference_k3} presents the asymptotic distribution for every element of $\hat\theta$, by approximating $\hat\theta_{m}-\theta^{*}_m$ by the score function $-\frac{f^{(m)}(\theta_m^{*}\given \theta_{-m}^{*})}{g^{(m)}(\theta_m^{*}\given \theta_{-m}^{*})}$. Although a series of works have studied the ranking estimation problem under pairwise or general multiway comparison \citep{maystre2015fast,jang2018top}, results on uncertainty quantification are still very much underdeveloped. 
To our best knowledge, existing literature only has results on quantifying the uncertainty of the BTL model \citep{gao2021uncertainty}. We take one step further to unravel the uncertainty of the preference score estimator $\hat\theta$ presented in Theorem \ref{thm_inference_k3} for the more general $M$-way comparison with $M \ge 2$. 

We comment on the connections of our results with several related literature. Firstly, compared with \cite{liu2022lagrangian} who study the asymptotic distribution of their estimator via the Lagrangian debiasing method, we have no requirement on the number of comparisons $L$ for establishing the asymptotic distribution whereas they require $L \gtrsim n^2$. Secondly, compared with \cite{han2020asymptotic}, who established the asymptotic results in the regime when $p\gtrsim 1/n^{1/10}$, we allow much sparser regime for the comparison hypergraph, namely, we allow  $p \gtrsim \textrm{poly}(\log n)/\binom{n-1}{M-1}$. This matches the sparsest sampling regime up to logarithm terms. In addition, when $M=2$, our model reduces to the BTL model whose uncertainty quantification is provided by \cite{gao2021uncertainty} with the requirement of $p\gtrsim \textrm{poly}(\log n) /n$. 

As we mentioned in Section~\ref{sec:mle_con}, the statistical rate achieved under our top choice multiway comparison model is the same as that in estimating the conventional PL model up to logarithm terms. Hence the difference in the asymptotic behaviors under these two models lies in their different asymptotic variances. In specific, for any fixed $M$, when the underlying scores $\{\theta_{i}^*\}_{i=1}^n$ are of the same order, we sacrifice a factor of order $M$ in our asymptotic variance compared with using the PL model. 
We want to emphasize that our analyzing techniques can be easily extended to the scenario where one ranks the top-$k$ items for any $k\le M$ given the $M$ items. This includes the PL model as a particular case with $k=M$. 
We present a formal theorem on the asymptotic normality of the MLE for the PL model with $M=3$ in Appendix {\red B.8}. The corresponding theorem for the PL model with general $M=\cO(1)$ can be derived similarly, and we leave the details to the interested readers. As we argued, we often only have access to top choices in reality. More importantly, studying the top choice in multiway comparison already conveys our key messages clearly without excessive technicality.

\section{Ranking Inferences}
\label{sec:ranking_inference}

In many real applications, people have access to ranking-related data and problems. For instance, multiple sources such as US News and Times Higher Education publish global university rankings every year; sports team rankings are an essential part of our everyday chat; companies try to hire the best candidates based on the evaluations of their interviewers. Most current practical usage of ranks only involves estimating preference scores and displaying the estimated ranks. However, we often lack the tools to address basic inference questions such as the following.
\begin{itemize}
    \item Is Team A indeed significantly stronger than Team B? Can one build an efficient confidence interval for the ranks of a few items of interest?
    \item Is the offer from a university truly a good choice to accept? How can we tell whether an item is among the top-$K$ ranking with high confidence? 
    \item How many candidates should a company hire to ensure all the best candidates are selected? How do we get a confidence set of items to ensure the screening of the top-$K$ items?
\end{itemize}
In this section, we hope to address all these critical statistical inference questions for ranks. 
To this end, we introduce a novel inference framework for the population ranks  $\{r_{m}\}_{m \in \cM}$ simultaneously, where $\cM$ is any subset of $[n]$ of interest and $r_m,m\in[n]$ is the true rank in descending order of the $m$-th item according to its underlying preference score $\theta_m^*$. 

\subsection{Two-Sided Confidence Intervals}
In this subsection, we propose a general framework for constructing two-sided confidence intervals for ranks based on the MLE estimator $\hat{\theta} = (\hat{\theta}_{1}, \ldots, \hat{\theta}_{n})^{\top}$ given in~\eqref{estimator}.
To construct the simultaneous confidence intervals for the ranks, a direct approach is to derive the asymptotic distribution of the corresponding empirical ranks $\{\hat{r}_{m}\}_{m \in \cM}$ and figure out the critical value. However, as $\hat{r}_{m}$ is an integer which depends on all $\hat{\theta}_{1}, \ldots, \hat{\theta}_{n}$ for any $m \in \cM$, this is nontrivial. 

By exploiting the mutual relationship between the scores and the ranks, we observe that constructing confidence intervals for the ranks can be reduced to constructing simultaneous confidence intervals for the pairwise differences between the population scores, whose empirical counterpart's distribution is easier to depict. Therefore, to circumvent the difficulty of deriving the distribution of ranks directly, we work on the statistics of estimated scores $\{\hat{\theta}_{m}\}_{m \in [n]}$ instead. 
Next, we will illustrate the key intuition of our approach via the following example.

\begin{example}[Simultaneous rank confidence intervals]
\label{Example_confidence_interval}
Let $\cM = \{m\}$ for some $1\leq m\leq n$ be the item of interest. We consider constructing the $(1 - \alpha) \times 100\%$ confidence interval for the population rank $r_{m}$, where $\alpha \in (0, 1)$ is a prescribed significance level. Let $\{[\cC_{L}(k, m), \cC_{U}(k, m)]\}_{k \neq m}$ denote the simultaneous confidence intervals of $\{\theta_{k}^{*} - \theta_{m}^{*}\}_{k \neq m}$ such that with probability at least $1 - \alpha$, we have all $\theta_{k}^{*} - \theta_{m}^{*} \in [\cC_{L}(k, m), \cC_{U}(k, m)]$. Observe that $\cC_{U}(k, m) < 0$ (resp. $\cC_{L}(k, m) > 0$) implies $\theta_{k}^{*} < \theta_{m}^{*}$ (resp. $\theta_{k}^{*} > \theta_{m}^{*}$). Counting the number of items ranked lower than item $m$ by using confidence upper bounds and the number of items ranked above item $m$ by using the confidence lower bounds, we can get a confidence interval for $r_m$.  In other words, 
\begin{align}\label{eq_confidence_interval_m_probability}
    \PP\Bigg(1 + \sum_{k \neq m} \mathbb{I}\{\cC_{L}(k, m) > 0\} \leq r_{m} \leq n - \sum_{k \neq m} \mathbb{I}\{\cC_{U}(k, m) < 0\}\Bigg) \geq 1 - \alpha. 
\end{align}
This yields a $(1 - \alpha) \times 100\%$ confidence interval for $r_{m}$.
\end{example}

We now formally introduce the procedure to construct the confidence intervals for multiple ranks $\{r_{m}\}_{m \in \cM}$ simultaneously. Motivated by Example~\ref{Example_confidence_interval}, the key step is to construct the simultaneous confidence intervals for the pairwise score differences $\{\theta_{k}^{*} - \theta_{m}^{*}\}_{m \in \cM, k \neq m}$. Towards this end, define 
\begin{align}\label{statistics}
    \mathcal{T}_{\eta} = \max_{m \in \mathcal{M}} \max_{k \neq m} \left|\frac{\sqrt{L}\{\hat{\theta}_{k} - \hat{\theta}_{m} - (\theta_{k}^{*} - \theta_{m}^{*})\}}{\eta_{mk}}\right|. 
\end{align}
Here $\{\eta_{mk}\}_{1\leq k\neq m\leq n}$ is a sequence of positive normalizing constants introduced to account for different scales of $\{\hat{\theta}_{k} - \hat{\theta}_{m} - (\theta_{k}^{*} - \theta_{m}^{*})\}_{1\leq k\neq m\leq n}$. A natural choice of $\{\eta_{mk}\}_{\{1\leq k\neq m\leq n\}}$ is the uniform consistent estimators of the standard deviations $\{\hat{\sigma}_{mk}\}$ of $\{\hat{\theta}_{k} - \hat{\theta}_{m}\}_{\{1\leq k\neq m\leq n\}}$, where
\begin{align}\label{eq_sigma_hat}
    \hat{\sigma}_{mk}^{2} = \frac{M(M - 1)!}{g^{(m)}(\hat{\theta}_{m}|\hat{\theta}_{-m})} + \frac{M(M - 1)!}{g^{(k)}(\hat{\theta}_{k}|\hat{\theta}_{-k})}.
\end{align}

For any $\alpha \in (0, 1)$, let $\zeta_{1 - \alpha}$ denote a consistent estimate of the $(1 - \alpha)$ quantile of the asymptotic distribution of $\cT_{\eta}$ such that
\begin{align}\label{eq_zeta_asymptotic}
    \PP(\cT_{\eta} \leq \zeta_{1 - \alpha}) \to 1 - \alpha.
\end{align}
Then, motivated by Example~\ref{Example_confidence_interval}, the $(1 - \alpha)\times 100\%$ simultaneous confidence intervals for $\{r_{m}\}_{m \in \cM}$ are given by $\{[\cR_{m}^{\diamond}(\eta, \zeta_{1 - \alpha}), \cR_{m}^{\sharp}(\eta, \zeta_{1 - \alpha})]\}_{m \in \cM}$, where  
\begin{align}\label{eq_r_m_construction}
    \cR_{m}^{\diamond}(\eta, \zeta_{1 - \alpha}) &= 1 + \sum_{k \neq m} \mathbb{I}\left\{\hat{\theta}_{k} - \hat{\theta}_{m} > \frac{\eta_{mk}}{\sqrt{L}}\times \zeta_{1 - \alpha}\right\},\cr
    \cR_{m}^{\sharp}(\eta, \zeta_{1 - \alpha}) &= n - \sum_{k \neq m} \mathbb{I}\left\{\hat{\theta}_{k} - \hat{\theta}_{m} < -\frac{\eta_{mk}}{\sqrt{L}}\times \zeta_{1 - \alpha}\right\}.
\end{align}
In view of~\eqref{eq_confidence_interval_m_probability} and \eqref{eq_zeta_asymptotic}, the constructed simultaneous confidence intervals satisfy that 
\begin{align*}
    \PP\left(\cap_{m \in \cM}\{\cR_{m}^{\diamond}(\eta, \zeta_{1 - \alpha}) \leq r_{m} \leq \cR_{m}^{\sharp}(\eta, \zeta_{1 - \alpha})\}\right) \geq 1 - \alpha - o(1). 
\end{align*}

As noted in \eqref{eq_zeta_asymptotic}, the key step for constructing the confidence interval of ranks of interest is to pick the critical value $\zeta_{1 - \alpha}$. In the next subsection, we propose to estimate $\zeta_{1 - \alpha}$ via the Gaussian multiplier bootstrap procedure. 

\subsection{Gaussian Multiplier Bootstrap}
\label{Section_GMB}
We now present our proposed estimate of the critical value $\zeta_{1 - \alpha}$ for $\cT_{\eta}$, which satisfies~\eqref{eq_zeta_asymptotic}. We begin by introducing some notations and definitions. For each $1\leq \ell \leq L$, define  
\begin{align}
\label{eq_definition_xi_ml}
    \xi_{m\ell} = \frac{M}{g^{(m)}(\theta_{m}^{*}|\theta_{-m}^{*})} \sum_{(i_{1} \neq \cdots \neq i_{M - 1}) \neq m} A_{i_{1}\cdots i_{M - 1}m} \left\{\frac{e^{\theta_{m}^{*}}}{\sum_{j = 1}^{M - 1} e^{\theta_{i_{j}}^{*}} + e^{\theta_{m}^{*}}} - y_{m}^{(\ell)}\right\}.
\end{align} 
In addition, let $\xi_{m} := \frac{f^{(m)}(\theta_{m}^{*}|\theta_{-m}^{*})}{g^{(m)}(\theta_{m}^{*}|\theta_{-m}^{*})} = \frac{1}{L}\sum_{\ell = 1}^{L} \xi_{m\ell}$ for each $m \in [n]$. Theorem~\ref{thm_inference_k3} ensures that $\hat{\theta}_{m} - \hat{\theta}_{k} - (\theta_{m}^{*} - \theta_{k}^{*}) \approx \frac{1}{L} \sum_{\ell = 1}^{L} (\xi_{k\ell} - \xi_{m\ell})$ uniformly for all $1\leq k\neq m \leq n$. Consequently,  we obtain
\begin{align}
\label{eq_linear_Approximation_T}
    \cT_{\eta} \approx \max_{m \in \cM} \max_{k \neq m} \frac{1}{\sqrt{L}} \left|\sum_{\ell = 1}^{L} \left(\frac{\xi_{k\ell} - \xi_{m\ell}}{\eta_{mk}}\right)\right|. 
\end{align}
Notice that $\{\xi_{k\ell} - \xi_{m\ell}\}_{1\leq \ell \leq L}$ is a sequence of i.i.d.~zero-mean random variables with conditional variance $\sigma_{mk}^{2} = \mathrm{Var}(\xi_{k\ell} - \xi_{m\ell}|A)$  where $\textrm{Var}(\cdot\given A)$ denotes the conditional variance given a fixed comparison hypergraph, i.e. 
\begin{align}\label{random_a}
    A=\{A_{i_1\cdots i_M}\} \textrm{ where recall } A_{i_1\cdots i_M}=1 \textrm{ if } (i_1,\cdots,i_M)\in\cE.
\end{align}
In what follows, we also write $\cT = \cT_{\eta}$ for simplicity.

Since the dimension of the random vectors $\{(\xi_{k\ell} - \xi_{m\ell})_{m \in \cM, k \neq m}\}_{1\leq \ell \leq L}$ is $(n-1)|\cM|$ and is usually much larger than $L$ (we only require $\textrm{poly}(\log n)/L \to 0$ in Theorem~\ref{Theorem_Bootstrap_consistency_simultaneous_inference} below), the classical multivariate central limit theorem cannot be utilized here to derive the asymptotic distribution of $\cT$. Instead, we shall invoke the high dimensional Gaussian approximation result~\citep{CCK2017, Chernozhukov2019} which quantifies the distance between the distribution functions of $\cT$ and its Gaussian analogue. Nevertheless, the asymptotic distribution of $\cT$ still depends on the unknown population scores $\{\theta_{m}^{*}\}_{m \in [n]}$ and the covariance structure of the random vector $(\xi_{m\ell} - \xi_{k\ell})_{m \in \cM, k \neq m}$. To approximate the asymptotic distribution of $\cT$, we apply a practically feasible Gaussian multiplier bootstrap procedure~\citep{CCK2017, Chernozhukov2019}. First define the empirical version of $\xi_{m\ell}$ as follows,
\begin{align*}
    \hat{\xi}_{m\ell} = \frac{M}{g^{(m)}(\hat{\theta}_{m}|\hat{\theta}_{-m})} \sum_{(i_{1} \neq \cdots \neq i_{M - 1}) \neq m} A_{i_{1}\cdots i_{M - 1}m} \left\{\frac{e^{\hat{\theta}_{m}}}{\sum_{j = 1}^{M - 1} e^{\hat{\theta}_{i_{j}}} + e^{\hat{\theta}_{m}}} - y_{m}^{(l)}\right\}.
\end{align*}
Let $\omega_{1}, \ldots, \omega_{L} \in \RR$ be i.i.d.~standard normal random variables. Then, in view of~\eqref{eq_linear_Approximation_T}, the bootstrap counterpart of $\cT$ is defined as  
\begin{align*}
    \mathcal{G} = \max_{m \in \mathcal{M}} \max_{k\neq m} \frac{1}{\sqrt{L}} \left|\sum_{\ell = 1}^{L} \left(\frac{\hat{\xi}_{k\ell} - \hat{\xi}_{m\ell}}{\hat{\sigma}_{mk}}\right)\omega_{\ell}\right|.
\end{align*}
For any $\alpha \in (0, 1)$, let $\cG_{1 - \alpha}$ denote the $(1 - \alpha)$-th quantile of $\mathcal{G}$, that is, 
\begin{align}\label{cutoff_value}
    \mathcal{G}_{1 - \alpha} = \inf\{z \in \mathbb{R} : \mathbb{P}(\mathcal{G} \leq z|Y, A) \geq 1 - \alpha\}. 
\end{align}
Here $\PP(\cdot\given Y,A)$ denotes the conditional probability where all randomness from $Y=\{y_{i_k}^{(\ell)}: i_k \in \{i_1,\cdots,i_M\},\ell\in[L]\}$ and $A$ defined in \eqref{random_a} is fixed.

\begin{theorem}
\label{Theorem_Bootstrap_consistency_simultaneous_inference}
Assume $\textrm{poly}(\log n)/L \to 0$. Then, under the conditions of Theorem~\ref{thm_inference_k3}, we have  
\begin{align}\label{eq_Bootstrap_consistency_simultaneous_inference}
    \left|\mathbb{P}\left\{\max_{m \in \mathcal{M}} \max_{k \neq m} \left|\frac{\sqrt{L}\{\hat{\theta}_{k} - \hat{\theta}_{m} - (\theta_{k}^{*} - \theta_{m}^{*})\}}{\hat{\sigma}_{mk}}\right| > \mathcal{G}_{1 - \alpha}\right\} - \alpha\right| \to 0. 
\end{align}
\end{theorem}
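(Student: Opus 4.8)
## Proof Proposal

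The plan is to establish \eqref{eq_Bootstrap_consistency_simultaneous_inference} by combining the linear approximation of $\cT$ from Theorem~\ref{thm_inference_k3}, a high-dimensional Gaussian approximation for the leading linear statistic, and a Gaussian comparison argument showing that the bootstrap statistic $\cG$ has approximately the same conditional law. I would organize the argument into four steps.

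\textbf{Step 1: Reduce $\cT$ to its linear part.} By Theorem~\ref{thm_inference_k3}, $\hat\theta_m - \theta_m^* = -f^{(m)}(\theta_m^*|\theta_{-m}^*)/g^{(m)}(\theta_m^*|\theta_{-m}^*) + \delta_m = \xi_m + \delta_m$ with $\|\delta\|_\infty = o((\binom{n-1}{M-1}pL)^{-1/2})$. Since each $\sigma_{mk} \asymp (\binom{n-1}{M-1}pL)^{-1/2}$ (which follows from the lower bound on $g^{(m)}$ implicit in Theorem~\ref{thm_consistency} together with concentration of the hypergraph degrees), the normalized remainder $\sqrt{L}\,(\delta_k-\delta_m)/\hat\sigma_{mk}$ is $o_P(1)$ uniformly over the at most $n|\cM| \le n^2$ pairs. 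Combined with the uniform consistency of $\hat\sigma_{mk}$ for $\sigma_{mk}$ (the content of the referenced Lemma on $\sigma$-consistency), this lets me replace $\cT$ by the linearized statistic $\cT_0 := \max_{m\in\cM}\max_{k\ne m}\frac{1}{\sqrt L}\bigl|\sum_{\ell=1}^L (\xi_{k\ell}-\xi_{m\ell})/\sigma_{mk}\bigr|$ up to an additive error $o_P(1)$ in the Kolmogorov distance sense, using an anti-concentration bound for maxima of Gaussians to absorb the slack. This is where the hypothesis $\mathrm{poly}(\log n)/L \to 0$ first enters, controlling the rate at which the remainder vanishes relative to the $\log$-dimensional fluctuations.

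\textbf{Step 2: Gaussian approximation for $\cT_0$.} Conditionally on the hypergraph $A$, the summands $V_\ell := \bigl((\xi_{k\ell}-\xi_{m\ell})/\sigma_{mk}\bigr)_{m\in\cM,k\ne m} \in \RR^{(n-1)|\cM|}$ are i.i.d.\ mean-zero, with coordinate variances equal to $1$, and are bounded (each $\xi_{m\ell}$ is a bounded linear combination of Bernoulli-type variables divided by $g^{(m)} \asymp \binom{n-1}{M-1}pL$, so $\|V_\ell\|_\infty \lesssim$ a polylog factor after using degree concentration). Applying the high-dimensional CLT of \cite{CCK2017, Chernozhukov2019} for maxima of $\ell_\infty$-type statistics over $L$ i.i.d.\ bounded vectors in dimension $N = (n-1)|\cM| \le n^2$, the distribution of $\cT_0$ is within $o(1)$ in Kolmogorov distance of that of $\|Z\|_\infty$ where $Z \sim N(0,\Sigma)$ and $\Sigma$ is the conditional correlation matrix of $V_1$; the error is a polynomial in $\log N = O(\log n)$ divided by a root power of $L$, hence $o(1)$ by assumption. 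I would state this on the high-probability event (over $A$) where the degree counts concentrate, so all the implicit constants behave.

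\textbf{Step 3: Bootstrap consistency via Gaussian comparison.} Conditionally on the data $(Y,A)$, $\cG$ is exactly the $\ell_\infty$-norm of a Gaussian vector with covariance $\hat\Sigma$, the empirical correlation matrix built from $\{(\hat\xi_{k\ell}-\hat\xi_{m\ell})/\hat\sigma_{mk}\}_\ell$. I would show $\max_{(m,k),(m',k')}|\hat\Sigma_{(m,k),(m',k')} - \Sigma_{(m,k),(m',k')}| = o_P(1/\log n)$: the entries of $\hat\Sigma$ differ from those of $\Sigma$ through (i) plugging in $\hat\theta$ for $\theta^*$ in $\xi$ and $g^{(m)}$, controlled by the $\ell_\infty$-rate of Theorem~\ref{thm_consistency} and Lipschitzness of the relevant functions, and (ii) replacing the population conditional covariance by its $L$-sample empirical analogue, controlled by a Bernstein bound over $N^2 \le n^4$ pairs, which needs $\mathrm{poly}(\log n)/L \to 0$. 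Then the Gaussian comparison inequality of \cite{Chernozhukov2019} gives that $\sup_t|\PP(\cG \le t\,|\,Y,A) - \PP(\|Z\|_\infty \le t)| = o_P(1)$.

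\textbf{Step 4: Assemble.} Steps 1--3 give, on an event of probability $1-o(1)$, that the Kolmogorov distance between the law of $\cT$ and the conditional law of $\cG$ is $o(1)$; together with the Gaussian anti-concentration bound (so that small perturbations of the quantile do not blow up the coverage), this yields $\PP(\cT > \cG_{1-\alpha}) = \alpha + o(1)$, which is exactly \eqref{eq_Bootstrap_consistency_simultaneous_inference}. The main obstacle I anticipate is Step 3, specifically controlling the error from substituting $\hat\theta$ for $\theta^*$ inside the bootstrap weights $\hat\xi_{m\ell}$ and the normalizers $g^{(m)}(\hat\theta_m|\hat\theta_{-m})$ uniformly well enough (to order $o(1/\log n)$ in the covariance entries, which is sharper than what the raw $\ell_\infty$-consistency gives coordinatewise) --- this will require exploiting that the covariance entries are averages over $\Omega(\binom{n-1}{M-1}p)$ hyperedges, so fluctuations from individual plug-in errors partially cancel. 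A secondary technical point is verifying the moment/boundedness conditions of the high-dimensional CLT uniformly on the good hypergraph event, for which the degree-concentration bounds underlying Theorem~\ref{thm_consistency} suffice.
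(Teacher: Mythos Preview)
Your proposal is correct and follows essentially the same route as the paper: linearize via Theorem~\ref{thm_inference_k3}, apply the high-dimensional Gaussian approximation of \cite{CCK2017,Chernozhukov2019} to the resulting $\ell_\infty$-type statistic, transfer to the bootstrap via a Gaussian comparison bound on $\|\hat\Sigma-\Sigma\|_\infty$, and assemble using Gaussian anti-concentration---the paper packages Steps~1--2 and Step~3 into auxiliary lemmas (uniform consistency of $\hat\sigma_{mk}$ and a bootstrap/Gaussian comparison lemma) in exactly the way you outline. One small slip to correct: $g^{(m)}(\theta_m^*\mid\theta_{-m}^*)\asymp\binom{n-1}{M-1}p$ and $\sigma_{mk}\asymp(\binom{n-1}{M-1}p)^{-1/2}$ carry no $L$ factor (the $L$ enters only through the $\sqrt{L}$ in $\cT$ and the sum over $\ell$), but once this is fixed your order computations in Steps~1--2 go through unchanged.
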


Theorem~\ref{Theorem_Bootstrap_consistency_simultaneous_inference} indicates that the estimated critical value $\mathcal{G}_{1 - \alpha}$ from the Gaussian multiplier bootstrap indeed controls the significance level of the simultaneous confidence intervals for ranks in $\mathcal{M}$ to the prespecified level $\alpha$. To make the inference valid, we do require $L$ to grow faster than $\textrm{poly}(\log n)$. So our current proposal does not work for say $L=1$, where each comparison set is only ranked by one single individual. Fortunately, this requirement is not too restrictive and can be satisfied by many practical datasets such as the more than 30 datasets on elections, Netflix movie ranking, and sports competitions hosted on the PrefLib website \citep{mattei2013preflib}.

\begin{remark}\label{remark4.1}
Recall the definition of $\rho_{m}(\theta)$ in~\eqref{eq_rho_theta} for each $m \in [n]$. In the context of pairwise comparison where $M = 2$, \citet{gao2021uncertainty} proposed a $(1 - \alpha)\times 100\%$ confidence interval $[\tilde{\cR}_{m}^{\diamond}, \tilde{\cR}_{m}^{\sharp}]$ for the population rank $r_{m}$, where  
\begin{align*}
    \tilde{\cR}_{m}^{\diamond} &= 1 + \sum_{k \neq m} \mathbb{I}\left\{\hat{\theta}_{k} - \hat{\theta}_{m} > \frac{z_{1 - \alpha/2}}{\rho_{m}(\hat{\theta})} + \frac{(1 + c_{0})\sqrt{2\log n}}{\rho_{k}(\hat{\theta})}\right\},\cr
    \tilde{\cR}_{m}^{\sharp} &= n - \sum_{k \neq m} \mathbb{I}\left\{\hat{\theta}_{m} - \hat{\theta}_{k} > \frac{z_{1 - \alpha/2}}{\rho_{m}(\hat{\theta})} + \frac{(1 + c_{0})\sqrt{2\log n}}{\rho_{k}(\hat{\theta})}\right\}. 
\end{align*}
Here $z_{1 - \alpha/2}$ is the $(1-\alpha)$-th quantile of the standard normal distribution, $c_{0} > 0$ is an arbitrary small and fixed positive constant. Note that the uncertainty of $\hat{\theta}_k$ for all $k\ne m$ is controlled by its high-probability Bonferroni bound, leading to the $\sqrt{2\log n}$ multiplier (the worst lower or upper bound). Denote $\tilde{\eta}_{mk} = \frac{z_{1 - \alpha/2}}{\rho_{m}(\hat{\theta})(1 + c_{0})\sqrt{2\log n}} + \frac{1}{\rho_{k}(\hat{\theta})}$ for each $1\leq k\neq m\leq n$. Then it is straightforward that 
\begin{align}\label{eq_Gao_confidence_interval_probability}
    \PP\left(r_{m} \in [\tilde{\cR}_{m}^{\diamond}, \tilde{\cR}_{m}^{\sharp}]\right) \geq \PP\left(\max_{k \neq m} \left|\frac{\hat{\theta}_{k} - \hat{\theta}_{m}}{\tilde{\eta}_{mk}}\right| > (1 + c_{0})\sqrt{2\log n}\right) \geq 1 - \alpha - o(1).
\end{align}
The length of the confidence interval is given by
\begin{align}\label{eq_length_confidence_interval_Gao}
    \tilde{\cL}_{1 - \alpha} = \tilde{\cR}_{m}^{\sharp} - \tilde{\cR}_{m}^{\diamond} = n - 1 - \sum_{k \neq m} \mathbb{I}\left\{\left|\frac{\hat{\theta}_{k} - \hat{\theta}_{m}}{\tilde{\eta}_{mk}}\right| > (1 + c_{0})\sqrt{2\log n}\right\}. 
\end{align}
In contrast, following \eqref{statistics}--\eqref{eq_r_m_construction}, if we set the normalization parameter $\eta_{mk}$ to be $\tilde{\eta}_{mk}$,  our confidence interval for $r_{m}$ is given by $[\cR_{m}^{\diamond}(\tilde{\eta}, \cG_{1 - \alpha}^{\diamondsuit}), \cR_{m}^{\sharp}(\tilde{\eta}, \cG_{1 - \alpha}^{\diamondsuit})]$ with length
\begin{align}\label{eq_length_confidence_interval}
    \cL_{1 - \alpha} = \cR_{m}^{\sharp}(\tilde{\eta}, \cG_{1 - \alpha}^{\diamondsuit}) - \cR_{m}^{\diamond}(\tilde{\eta}, \cG_{1 - \alpha}^{\diamondsuit}) = n - 1 - \sum_{k \neq m} \mathbb{I}\left\{\left|\frac{\hat{\theta}_{k} - \hat{\theta}_{m}}{\tilde{\eta}_{mk}}\right| > \cG_{1 - \alpha}^{\diamondsuit}\right\},
\end{align}
where $\cG_{1 - \alpha}^{\diamondsuit}$ is the $(1 - \alpha)$-th quantile of $\cG^{\diamondsuit} = \max_{k \neq m} \frac{1}{L}\left|\sum_{\ell = 1}^{L}\left(\frac{\hat{\xi}_{k\ell} - \hat{\xi}_{m\ell}}{\tilde{\eta}_{mk}}\right)\omega_{\ell}\right|$. Following the same proof of Theorem~\ref{Theorem_Bootstrap_consistency_simultaneous_inference}, we can similarly obtain
\begin{align*}
    \PP\left(\max_{k \neq m} \left|\frac{\hat{\theta}_{k} - \hat{\theta}_{m}}{\tilde{\eta}_{mk}}\right| > \cG_{1 - \alpha}^{\diamondsuit}\right) \to 1 - \alpha. 
\end{align*}
However, it is easy to verify that the length of our proposed confidence interval in~\eqref{eq_length_confidence_interval} is shorter than that given by \cite{gao2021uncertainty} in~\eqref{eq_length_confidence_interval_Gao}. In detail,  
\begin{align*}
    \PP\left(\tilde{\cL}_{1 - \alpha} \geq \cL_{1 - \alpha}\right) \geq \PP\left(\cG_{1 - \alpha}^{\diamondsuit} \leq (1 + c_{0})\sqrt{2\log n}\right) \to 1. 
\end{align*}
\end{remark}

In conclusion, for any $m \in [n]$, our confidence interval of $r_{m}$ will be narrower than that of~\citet{gao2021uncertainty} with probability tending to $1$. In addition, from the above remark, the confidence interval proposed by \cite{gao2021uncertainty} using high-probability Bonferroni adjustment can also be easily extended to the situation of $M > 2$, once we plug-in the updated formula for $\rho_m(\hat\theta)$ and $\rho_k(\hat\theta)$ as in \eqref{eq_rho_theta} in the normalization parameter $\tilde \eta_{mk}$. However, their Bonferroni confidence interval cannot be directly extended to simultaneous inference of a few ranks whereas in our proposal we have the freedom to choose any interested set $\mathcal M$ of size more than one. 
In our simulation and real data analyses below, we will demonstrate empirically that there is no big difference on  using $\tilde{\eta}_{mk}$ or $\hat\sigma_{mk}$ as the normalization parameter. Thus, we will see the length of confidence intervals for both choices are strictly smaller than the Bonferroni confidence interval constructed in \cite{gao2021uncertainty}.

\subsection{One-Sided Confidence intervals}
\label{Section_Application_Rank_inference}
In this subsection, we provide details on constructing (simultaneous) one-sided intervals for population ranks, and utilizing the one-sided intervals to resolve two more important questions, namely top-$K$ placement testing and sure screening of top-$K$ candidates. This further illustrates the wide applicability of our methodology.

For one-sided intervals, the overall procedure is similar to constructing two-sided confidence intervals. Specifically, let
\begin{align}\label{one_side_statistic}
    \cG^{\circ} = \max_{m \in \mathcal{M}} \max_{k\neq m} \frac{1}{\sqrt{L}} \sum_{\ell = 1}^{L} \left(\frac{\hat{\xi}_{m\ell} - \hat{\xi}_{k\ell}}{\hat{\sigma}_{mk}}\right)\omega_{\ell}
\end{align}
where  $\omega_{1}, \ldots, \omega_{L} \in \RR$ are as before i.i.d.~standard normal random variables. Correspondingly, let $\cG_{1 - \alpha}^{\circ}$ be its $(1 - \alpha)$-th quantile. Under the conditions of Theorem~\ref{Theorem_Bootstrap_consistency_simultaneous_inference}, it follows that   
\begin{align*}
    \left|\mathbb{P}\left\{\max_{m \in \mathcal{M}} \max_{k \neq m} \frac{\sqrt{L}\{\hat{\theta}_{k} - \hat{\theta}_{m} - (\theta_{k}^{*} - \theta_{m}^{*})\}}{\hat{\sigma}_{mk}} > \mathcal{G}_{1 - \alpha}^{\circ}\right\} - \alpha\right| \to 0. 
\end{align*}
Then the $(1 - \alpha)\times 100\%$ simultaneous left-sided confidence intervals for $\{r_{m}\}_{m \in \cM}$ are given by 
\begin{align}\label{eq_one_sided_confidence_interval}
    [\cR_{m}^{\diamond}, n] = \left[1 + \sum_{k \neq m} \mathbb{I}\left\{\hat{\theta}_{k} - \hat{\theta}_{m} > \frac{\hat{\sigma}_{mk}}{\sqrt{L}}\times \cG_{1 - \alpha}^{\circ}\right\}, n\right], \enspace m \in \cM.
\end{align}
We next show the usefulness of the one-sided confidence intervals with two examples. The first one is on testing whether an item of interest lies in the top-$K$ placement. The mathematical framework is given in Example \ref{example1}. 

\begin{example}[Testing top-$K$ placement]\label{example1}
Let $\cM = \{m\}$ for some $1\leq m\leq n$ and let $K \geq 1$ be an prescribed positive integer. We are interested in testing whether $m$-th item is among the top-$K$ ranked items. So the hypotheses under consideration is
\begin{align}\label{eq_top_K_test}
    H_{0} : r_{m} \leq K \enspace \mathrm{versus} \enspace H_{1} : r_{m} > K. 
\end{align}
Based on the one-sided confidence interval $[\cR_{m}^{\diamond}, n]$ in~\eqref{eq_one_sided_confidence_interval}, for any $\alpha \in (0, 1)$, a level~$\alpha$ test for~\eqref{eq_top_K_test} is simply given by 
\begin{align*}
    \psi_{m, K} = \mathbb{I}\{\cR_{m}^{\diamond} > K\}. 
\end{align*}
\end{example}

\begin{proposition}
\label{Proposition_empirical_size}
Under the conditions of Theorem~\ref{Theorem_Bootstrap_consistency_simultaneous_inference}, we have $\PP(\psi_{m, K} = 1|H_{0}) \leq \alpha$ + o(1). In addition, $\PP(\psi_{m, K} = 1|H_{1}) \to 1$ holds when $\theta_{(K)}^{*} - \theta_{m}^{*} \gtrsim \sqrt{\frac{\log n}{{n - 1 \choose M - 1}pL}}$, where $\theta_{(K)}^*$ denotes the underlying score of the item with true rank $K$. 
\end{proposition}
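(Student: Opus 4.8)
\noindent\emph{Proof plan.} I would prove the size and power statements separately, both anchored on the one-sided analogue of Theorem~\ref{Theorem_Bootstrap_consistency_simultaneous_inference} (the display preceding \eqref{eq_one_sided_confidence_interval}), which with $\cM=\{m\}$ reads
\[
\PP\left(\max_{k\neq m}\frac{\sqrt{L}\,\{\hat\theta_{k}-\hat\theta_{m}-(\theta_{k}^{*}-\theta_{m}^{*})\}}{\hat\sigma_{mk}}\le\cG_{1-\alpha}^{\circ}\right)\to 1-\alpha.
\]

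\noindent\emph{Size.} On the event in the last display, every $k\neq m$ obeys $\hat\theta_{k}-\hat\theta_{m}\le(\theta_{k}^{*}-\theta_{m}^{*})+\hat\sigma_{mk}\cG_{1-\alpha}^{\circ}/\sqrt{L}$, hence $\mathbb{I}\{\hat\theta_{k}-\hat\theta_{m}>\hat\sigma_{mk}\cG_{1-\alpha}^{\circ}/\sqrt{L}\}\le\mathbb{I}\{\theta_{k}^{*}>\theta_{m}^{*}\}$; summing over $k\neq m$ and using $r_{m}=1+\sum_{k\neq m}\mathbb{I}\{\theta_{k}^{*}>\theta_{m}^{*}\}$ gives $\cR_{m}^{\diamond}\le r_{m}$ on that event. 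Therefore $\PP(\cR_{m}^{\diamond}\le r_{m})\ge 1-\alpha-o(1)$, and under $H_{0}$ (so $r_{m}\le K$) I obtain $\PP(\psi_{m,K}=1)=\PP(\cR_{m}^{\diamond}>K)\le\PP(\cR_{m}^{\diamond}>r_{m})\le\alpha+o(1)$, which is the first assertion.

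\noindent\emph{Power.} Under $H_{1}$, the set $S$ of items carrying the $K$ largest scores satisfies $|S|=K$ and $m\notin S$, and the separation hypothesis yields $\theta_{k}^{*}-\theta_{m}^{*}\ge\theta_{(K)}^{*}-\theta_{m}^{*}\ge c_{\Delta}\sqrt{\log n/[\binom{n-1}{M-1}pL]}$ for every $k\in S$, where $c_{\Delta}$ is a sufficiently large constant (the content of the ``$\gtrsim$'' in the statement). I would show that with probability $1-o(1)$ one has $\hat\theta_{k}-\hat\theta_{m}>\hat\sigma_{mk}\cG_{1-\alpha}^{\circ}/\sqrt{L}$ for all $k\in S$ at once, which forces $\cR_{m}^{\diamond}\ge 1+|S|=K+1>K$, i.e.\ $\psi_{m,K}=1$. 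Three uniform bounds do the job: (i) by Theorem~\ref{thm_consistency}, $\|\hat\theta-\theta^{*}\|_{\infty}\le C_{\infty}\sqrt{\log n/[\binom{n-1}{M-1}pL]}$ with probability $1-o(1)$, so $\hat\theta_{k}-\hat\theta_{m}\ge(\theta_{k}^{*}-\theta_{m}^{*})-2\|\hat\theta-\theta^{*}\|_{\infty}\ge\tfrac{1}{2}c_{\Delta}\sqrt{\log n/[\binom{n-1}{M-1}pL]}$ once $c_{\Delta}\ge 4C_{\infty}$; (ii) the number of active hyperedges through a fixed item is a Binomial with $\binom{n-1}{M-1}$ trials and success probability $p$, hence with mean $\binom{n-1}{M-1}p\gtrsim\textrm{poly}(\log n)$, so a Chernoff bound and a union bound over the $n$ items give $g^{(m)}(\hat\theta_{m}\mid\hat\theta_{-m})\asymp\binom{n-1}{M-1}p$ uniformly in $m$ (the weights in \eqref{g_m_main} being $\Theta(1)$ since $\theta^{*}\in\Theta(\kappa)$ and $\|\hat\theta-\theta^{*}\|_{\infty}=o(1)$), whence $\hat\sigma_{mk}/\sqrt{L}\asymp 1/\sqrt{\binom{n-1}{M-1}pL}$ by \eqref{eq_sigma_hat}; (iii) conditionally on $(Y,A)$ the bootstrap statistic $\cG^{\circ}$ in \eqref{one_side_statistic} is the maximum of at most $n-1$ centered Gaussian variables whose variances $L^{-1}\sum_{\ell}(\hat\xi_{m\ell}-\hat\xi_{k\ell})^{2}/\hat\sigma_{mk}^{2}$ are uniformly $\Theta(1)$ by the uniform consistency of $\{\hat\sigma_{mk}\}$ and the concentration of the empirical second moments $\{L^{-1}\sum_{\ell}(\hat\xi_{m\ell}-\hat\xi_{k\ell})^{2}\}$ established inside the proof of Theorem~\ref{Theorem_Bootstrap_consistency_simultaneous_inference}, so a union bound gives $\cG_{1-\alpha}^{\circ}\le C_{1}\sqrt{\log n}$ with probability $1-o(1)$. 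Combining (i)--(iii), the threshold $\hat\sigma_{mk}\cG_{1-\alpha}^{\circ}/\sqrt{L}\le C_{2}\sqrt{\log n/[\binom{n-1}{M-1}pL]}$ is strictly beaten by $\hat\theta_{k}-\hat\theta_{m}$ for all $k\in S$ as soon as $c_{\Delta}>2C_{2}$, finishing the power analysis.

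\noindent\emph{Main obstacle.} The size bound and estimates (i)--(ii) are routine given Theorems~\ref{thm_consistency}--\ref{Theorem_Bootstrap_consistency_simultaneous_inference} and Bernstein-type concentration; the crux is estimate (iii), the high-probability bound $\cG_{1-\alpha}^{\circ}=O(\sqrt{\log n})$. It amounts to converting the uniform (over the $n-1$ coordinates $k\neq m$) consistency of $\{\hat\sigma_{mk}\}$ and the concentration of the empirical second moments $\{L^{-1}\sum_{\ell}(\hat\xi_{m\ell}-\hat\xi_{k\ell})^{2}\}$ into a deterministic-on-a-high-probability-event upper bound for the quantile of a Gaussian maximum -- which is precisely the ingredient already developed for Theorem~\ref{Theorem_Bootstrap_consistency_simultaneous_inference}, so the argument reuses that machinery. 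A minor point is that the identity $r_{m}=1+\sum_{k\neq m}\mathbb{I}\{\theta_{k}^{*}>\theta_{m}^{*}\}$ presumes no ties among the relevant preference scores, which I would either assume outright or absorb into the rank convention.
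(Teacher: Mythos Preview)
Your proposal is correct and follows essentially the same route as the paper's own proof. For the size, the paper also works on the event $\cE_{m}=\{\max_{k\neq m}\sqrt{L}\{\hat\theta_{k}-\hat\theta_{m}-(\theta_{k}^{*}-\theta_{m}^{*})\}/\hat\sigma_{mk}\le\cG_{1-\alpha}^{\circ}\}$ and bounds each indicator by $\mathbb{I}\{\theta_{k}^{*}>\theta_{m}^{*}\}$ exactly as you do; for the power, the paper defines a high-probability event $\cE_{\Theta}$ carrying precisely your three ingredients---$\hat\sigma_{mk}^{2}\le C/[\binom{n-1}{M-1}p]$, $\max_{k\neq m}|\Theta_{mk}|\le C\sqrt{\log n}$ with $\Theta_{mk}=\sqrt{L}\{\hat\theta_{k}-\hat\theta_{m}-(\theta_{k}^{*}-\theta_{m}^{*})\}/\hat\sigma_{mk}$, and $\cG_{1-\alpha}^{\circ}\le C\sqrt{\log n}$---invoking the supplementary lemmas (their Lemma on $\hat\sigma$ consistency and a Gaussian-comparison lemma) for the last two, which matches your references to Theorem~\ref{Theorem_Bootstrap_consistency_simultaneous_inference}'s machinery. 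The only cosmetic difference is that you control the raw deviation $|\hat\theta_{k}-\hat\theta_{m}-(\theta_{k}^{*}-\theta_{m}^{*})|$ via Theorem~\ref{thm_consistency} and then combine with (ii), whereas the paper controls the already-normalized $\Theta_{mk}$; these are equivalent once (ii) is available.
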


Proposition \ref{Proposition_empirical_size} summarizes the size and power of the above test $\psi_{m, K}$. From Proposition \ref{Proposition_empirical_size}, we are able to control the type-I error effectively below $\alpha$ under the null hypothesis. Moreover, when the alternative holds, the power of the test goes rapidly to one as long as the score difference is larger than the threshold of order $\cO(\sqrt{{\log n}/({n - 1 \choose M - 1}pL}))$. 

Besides testing for the top-$K$ placement, our second crucial example is on constructing a screened candidate set that contains the top-$K$ items with high probability. This is particularly useful in college candidate admission or company hiring decision. Oftentimes, a university or a company would like to design certain admission or hiring policy with the high-probability guarantee of the sure screening of true top-$K$ candidates. We formulate this rigorously in Example \ref{exp:candidate_ad}.
\begin{example}[Candidate admission and sure screening]\label{exp:candidate_ad}
Let $r : [n] \to [n]$ denote the rank operator on $n$ items which maps each item to its population rank and $\cK = \{r^{-1}(1), \ldots, r^{-1}(K)\}$ be the top-$K$ ranked items. Our goal is to select a set of candidates $\hat{\cI}_{K}$ which contains the top-$K$ candidates with a prescribed probability, that is,
\begin{align}\label{eq_cond_IK}
    \mathbb{P}\left( \cK \subset \hat{\mathcal{I}}_{K}\right) \geq 1 - \alpha,
\end{align}
for some $\alpha \in (0, 1)$.  Let $\cM = [n]$ and $\{[\cR_{m}^{\diamond}, n]\}_{m \in [n]}$ denote the corresponding $(1 - \alpha) \times 100\%$ simultaneous left-sided confidence intervals in~\eqref{eq_one_sided_confidence_interval}. Notice that $\cR_{m}^{\diamond} > K$ implies that $r_{m} > K$ for each $m \in [n]$. Hence a natural choice of $\hat{\cI}_{K}$ which satisfies~\eqref{eq_cond_IK} would be 
\begin{align*}
    \hat{\cI}_{K} = \{1\leq m\leq n : \cR_{m}^{\diamond} \leq K\}. 
\end{align*}
In practice, the candidates are more likely to be admitted based on their empirical ranks $\{\hat{r}(1), \ldots, \hat{r}(n)\}$. In this scenario, we then seek the minimal number $D \geq 1$ such that 
\begin{align*}
    \PP\left(\cK \subset \{\hat{r}^{-1}(1), \ldots, \hat{r}^{-1}(D)\}\right) \geq 1 - \alpha. 
\end{align*}
To estimate $D$, with a slight abuse of notation, set $\hat{\sigma}_{mk}= 1$ for all $1\leq k\neq m\leq n$ in \eqref{one_side_statistic} and denote the corresponding $(1 - \alpha)\times 100\%$ simultaneous left-sided confidence intervals as $\{[\hat{\cR}_{m}^{\diamond}, n]\}$. Then, similar to $\hat{\cI}_{K}$, our estimator for $D$ is defined by  
\begin{align*}
    \hat{D} = \max \left\{1\leq m\leq n : \hat{\cR}_{m}^{\diamond} \le K\right\},
\end{align*}
where $\hat{\cR}_{1}^{\diamond} \leq \hat{\cR}_{2}^{\diamond} \leq \ldots \leq \hat{\cR}_{n}^{\diamond}$ maintains the same ranking as the empirical ranks.
\end{example}

In Example \ref{exp:candidate_ad}, we constructed sure screening set for top-$K$ candidates. Although many works studied sure screening property of the high-dimensional regression coefficients \citep{fan2008sure,fan2010sure,zhu2011model,li2012robust,barut2016conditional,wang2016high,fan2022latent}, the study on the sure screening property of population ranks is much less explored and to our best knowledge, our procedure is the first one to have concrete theoretical guarantee.

\begin{figure}[t]
    \centering
    \begin{tabular}{cc}
   \hskip-30pt \includegraphics[width=0.45\textwidth]{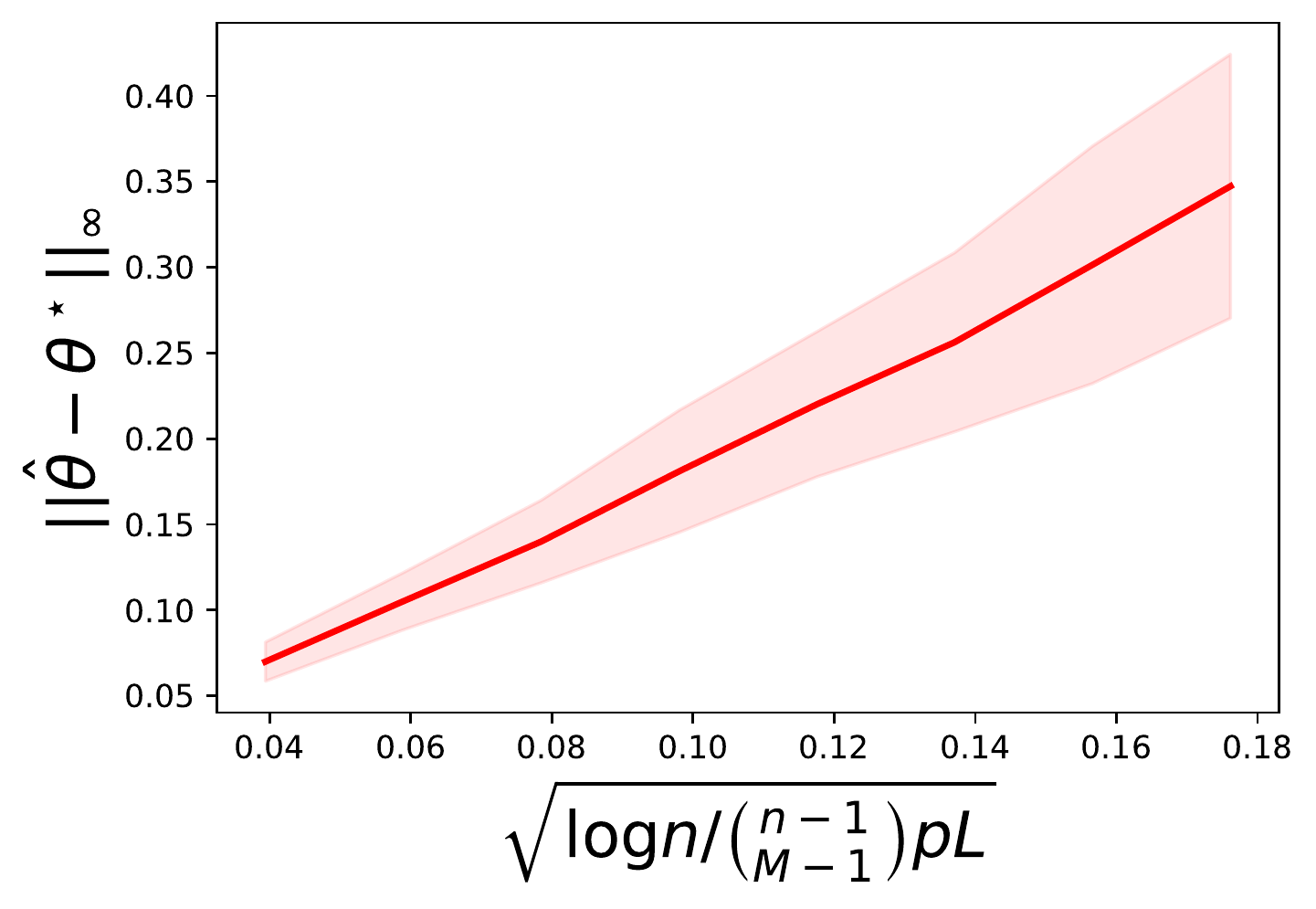}
   &\hskip-5pt \includegraphics[width=0.45\textwidth]{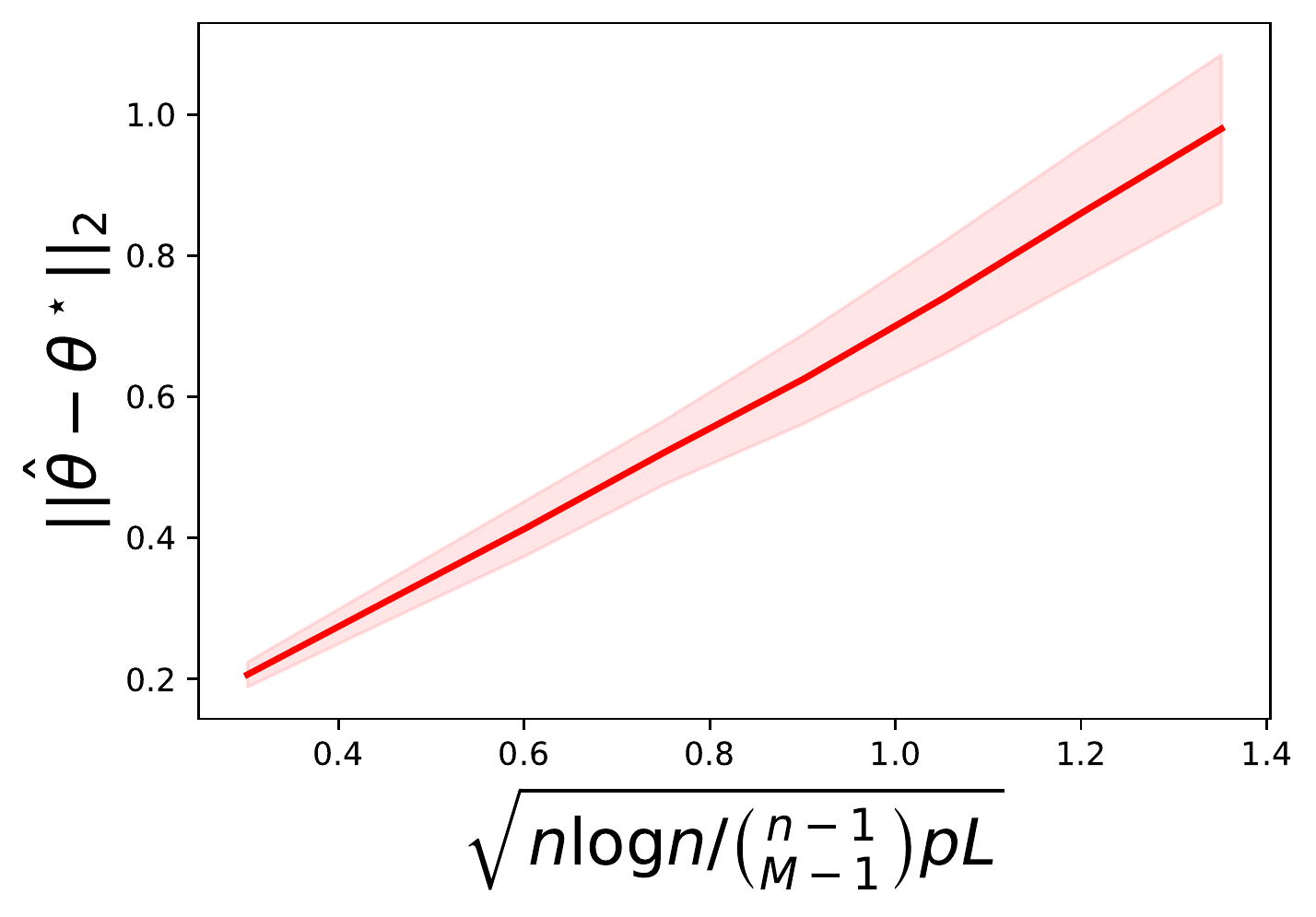}\\
   (a)&(b)
   \end{tabular}
    \caption{$\ell_{\infty}$- and $\ell_2$- statistical errors of the MLE $\hat\theta$ against the theoretical rate when $p$ varies. The solid lines represent the averaged statistical errors of 200 repetitions and the light areas denote the standard deviations.}
    \label{fig:consist_p}
\end{figure}

\begin{figure}[t]
    \centering
    \begin{tabular}{cc}
   \hskip-30pt \includegraphics[width=0.45\textwidth]{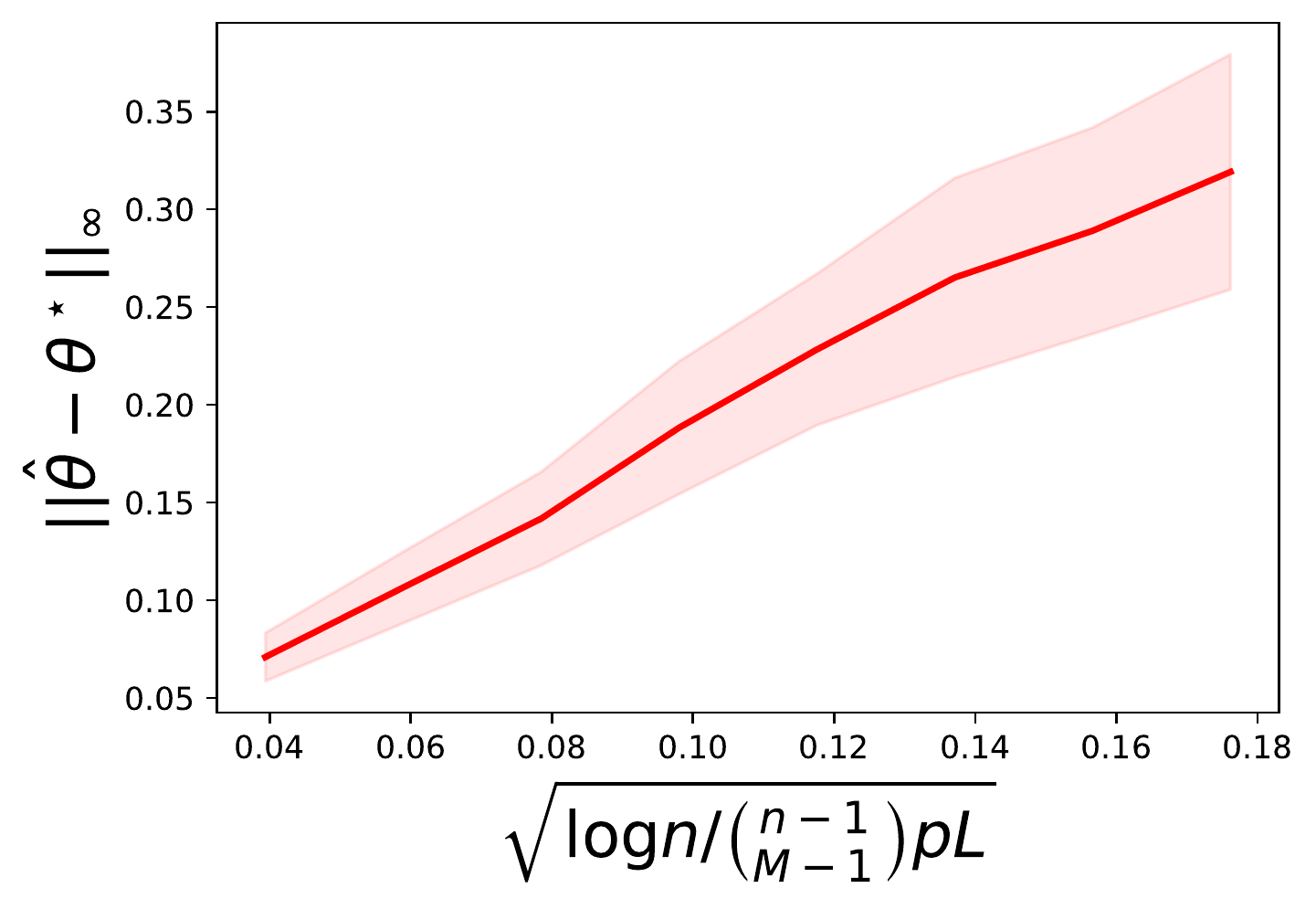}
   &\hskip-5pt \includegraphics[width=0.45\textwidth]{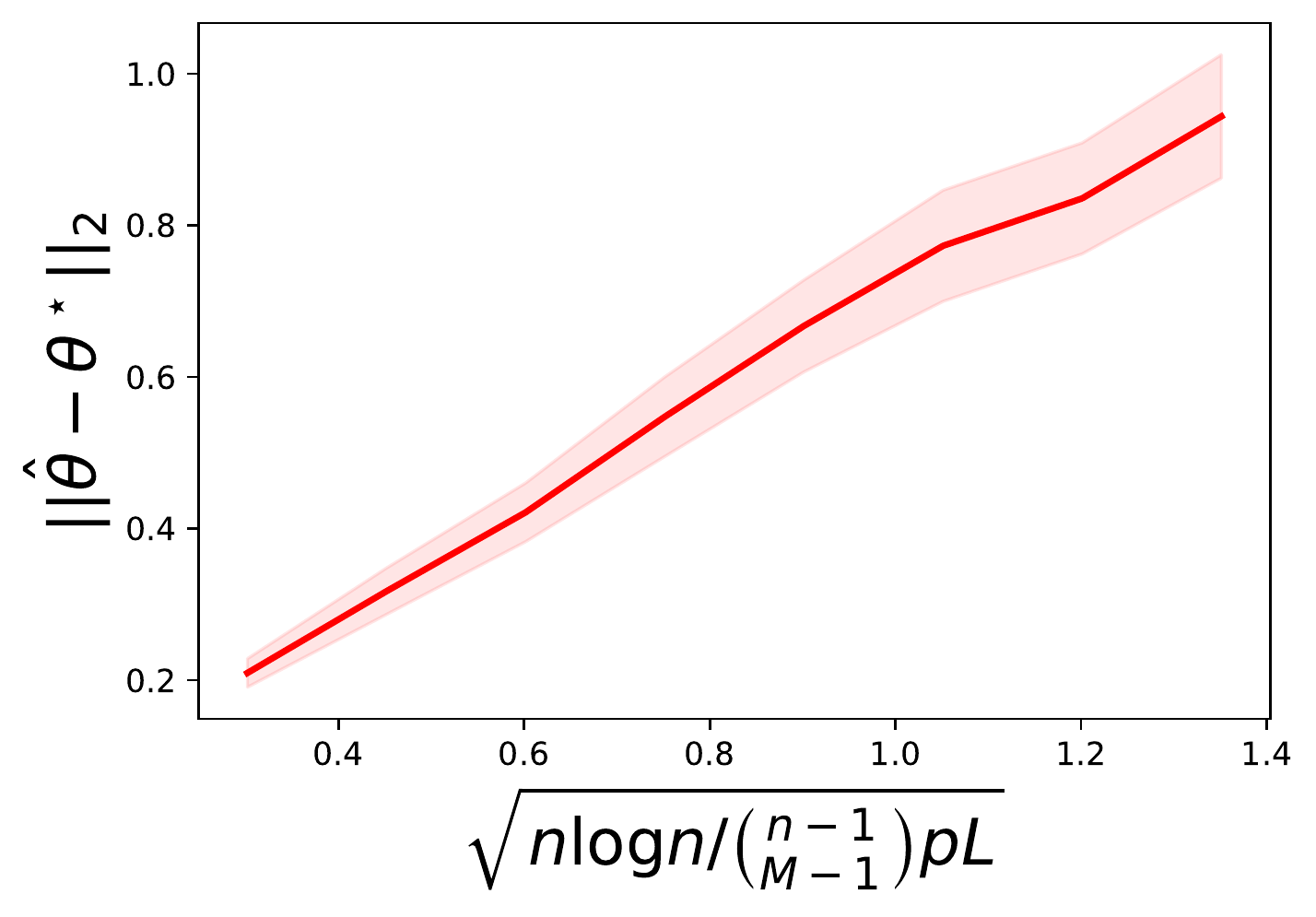}\\
   (a)&(b)
   \end{tabular}
    \caption{$\ell_{\infty}$- and $\ell_2$- statistical errors of the MLE $\hat\theta$ against the theoretical rate when $L$ varies. The remaining captions are the same as those in Figure \ref{fig:consist_p}.}
    \label{fig:consist_l}
\end{figure}

\section{Numerical Studies} \label{sec:numerical}

In this section, we first conduct numerical studies via synthetic data to demonstrate our theoretical results in finite samples in Section~\ref{sec:numerical5.1} and to illustrate the effectiveness of our proposed confidence interval methodology in Section~\ref{sec:numerical5.2}. We then analyze a real data example in Section~\ref{sec:numeric5.3}.

\subsection{Synthetic Data Analysis}\label{sec:numerical5.1}
To confrim the correctness of our theorem in finite samples, we will verify the statistical convergence rate, the asympototic normality and the effectiveness of Gaussian multiplier bootstrap in this subsection. 

\medskip
\textbf{Statistial rates of convergence.} 
We first validate the statistical rates of our MLE estimator $\hat\theta$ in both $\ell_{\infty}$- and $\ell_{2}$-norms.
In the first simulation, we fix $n=60,M=3,L=20$ and let $p$ vary such that $\sqrt{\log n/\binom{n-1}{M-1}pL}$ takes uniform grids from $0.04$ to $0.18$. Meanwhile, we generate every entry of the true $\theta^*$ independently from Uniform$[2,4]$. We then record the $\ell_{\infty}$- and $\ell_2$- statistical errors of $\hat\theta$ to $\theta^*$ by solving the MLE given in \eqref{likelihood_main}. The average errors together with the standard deviations of $200$ repetitions for each $p$ are displayed in Figure \ref{fig:consist_p}. 
In the second simulation, we investigate the effects of $L$ on these statistical rates. In this scenario, we fix $n=60,M=3,p=0.05$ and let $L$ vary such that $\sqrt{\log n/\binom{n-1}{M-1}pL}$ takes uniform grids from $0.04$ to $0.18$. The remaining procedures are the same as above and the results are shown in Figure \ref{fig:consist_l}.
Clearly, we observe from Figures \ref{fig:consist_p} and \ref{fig:consist_l}, the statistical rates are proportional to their theoretical rates, as indicated by the overall linear pattern. These simulation results lend further support to the theoretical results in Theorem \ref{thm_consistency}.

\begin{figure}[]
    \centering
    \begin{tabular}{c}
    \includegraphics[width=\textwidth]{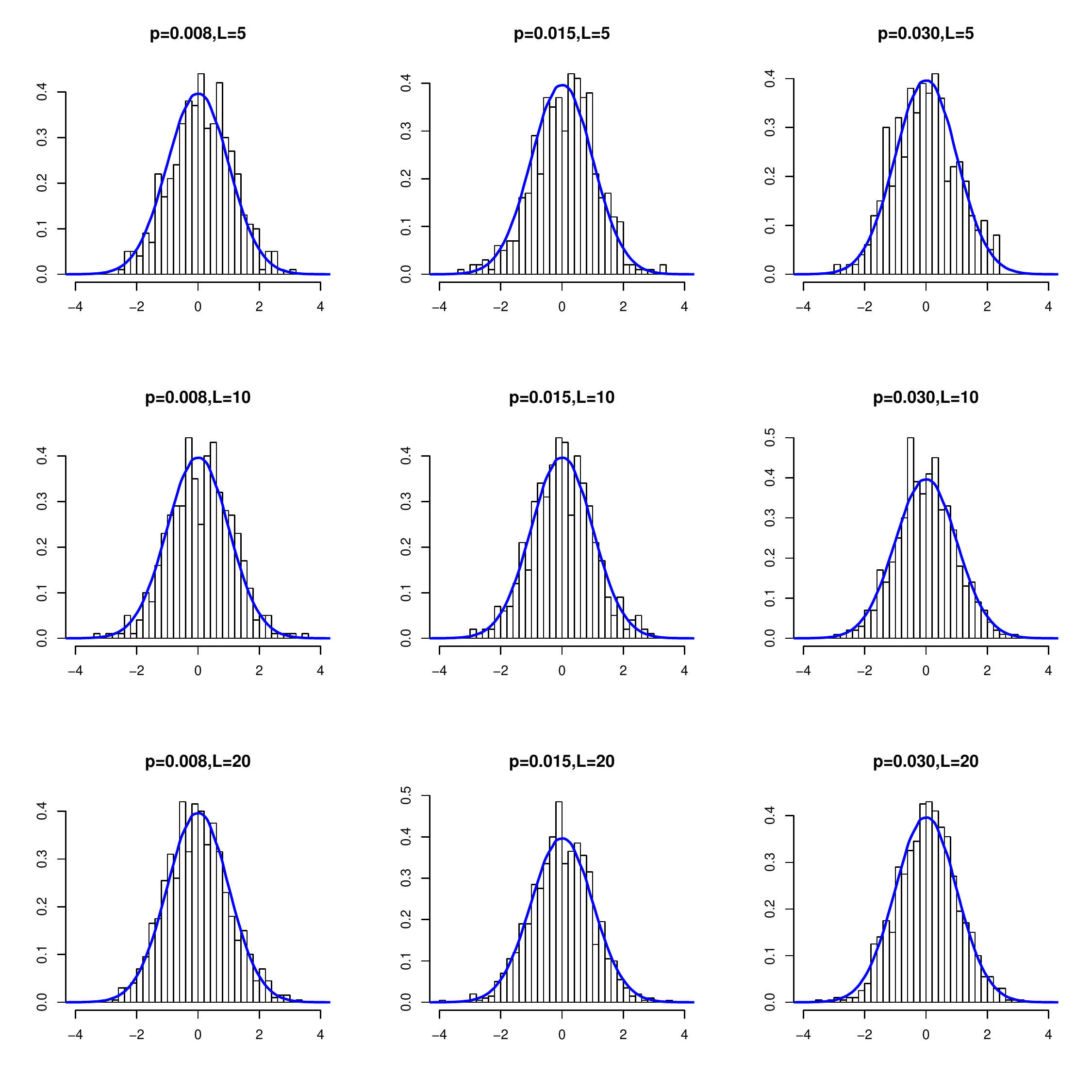}
   \end{tabular}
    \caption{Histogram of $\rho_1(\hat\theta)(\hat\theta_1 - \theta_1^*)$ against standard Gaussian distribution. The blue curve denotes the standard Gaussian distribution. The experiments were repeated for 500 times for each combination of $L\in \{5,10,20\}$ and $p \in \{0.008, 0.015, 0.03\}$.}
    \label{fig:qqplot}
\end{figure}

\medskip
\textbf{Asymptotic normality.}
Next, we investigate the uncertainty quantification of the MLE estimator $\hat\theta$. Here we fix $n=60,M=3$ and choose $L,p$ from $\{5,10,20\}$ and $\{0.008,0.015,0.03\}$ respectively, which results in $9$ combinations.  For each combination, the true $\theta^*$ is generated independently from Uniform$[2,4]$ for $500$ times. We record the empirical distributions of the standardized $\hat\theta_1$, that is $\rho_1(\hat\theta)(\hat\theta_1 - \theta_1^*)$, of these $500$ repetitions, and check its normality via histograms, presented in Figure \ref{fig:qqplot}.  
From Figure \ref{fig:qqplot}, we observe that the empirical distribution of $\hat\theta_1$ is well approximated by the standard Gaussian distribution, especially when we have a larger $p$ and $L.$ This is consistent with the theoretical results in Theorem \ref{thm_inference_k3}.
 
\begin{figure}[t]
    \centering
    \begin{tabular}{c}
    \includegraphics[width=0.7\textwidth]{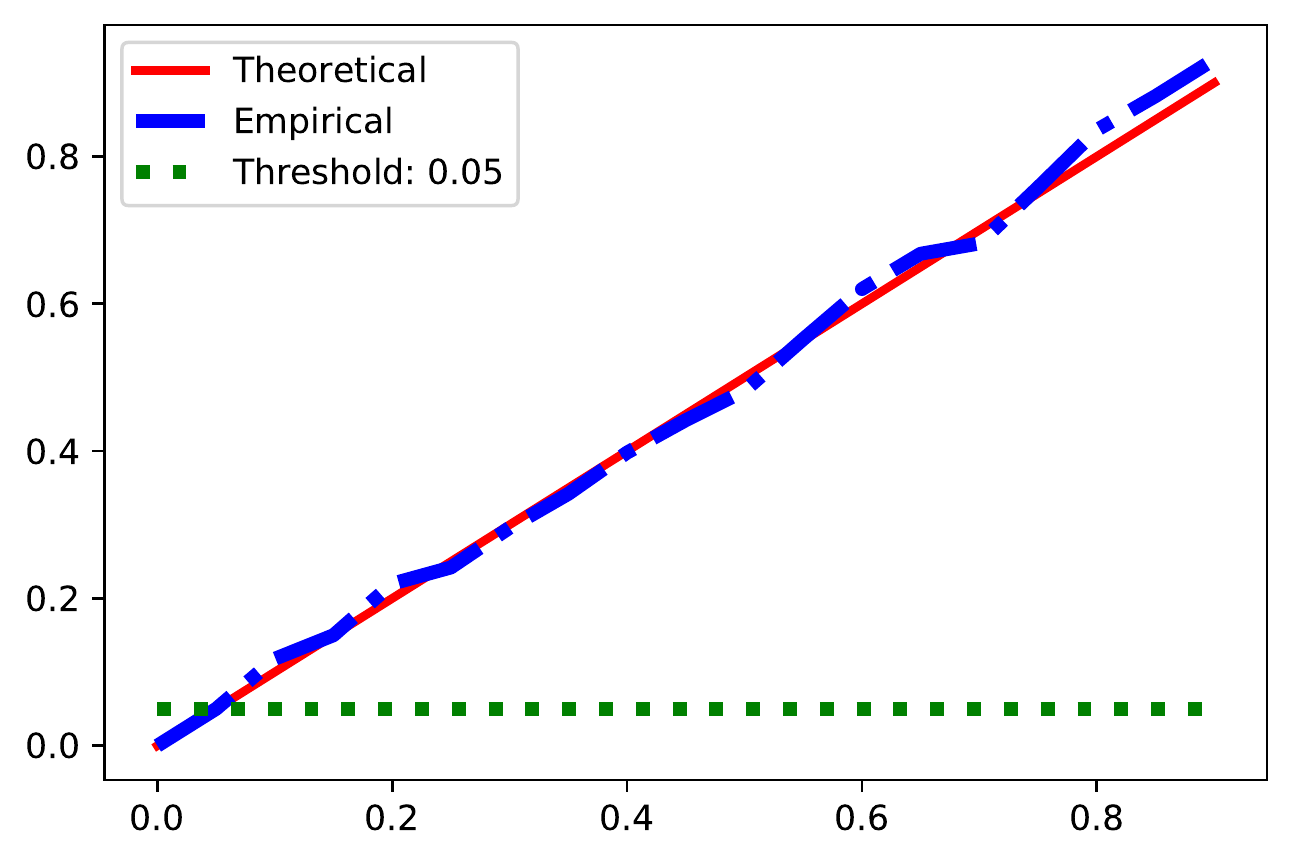}
   \end{tabular}
    \caption{PP-plot of empirical probability $\hat{\mathbb{P}}(\cT > \cG_{1-\alpha})$ of $\cT$ given in \eqref{statistics} with $\cM=\{1\}$ against theoretical significance level $\alpha$. The red solid and blue dash-dotted lines represent theoretical and empirical probabilities respectively. The green dotted line represents the the case when significance is chosen to be $0.05$. }
    \label{fig:pp_plot}
\end{figure}
 
\medskip
\textbf{Gaussian multiplier bootstrap.} Finally, we validate the Gaussian approximation results discussed in Section~\ref{sec:ranking_inference}. We let $n=60,M=3,L=80,p=0.05$ and investigate the distribution of $\cT$ in \eqref{statistics} with $\cM=\{1\}$. By Theorem~\ref{Theorem_Bootstrap_consistency_simultaneous_inference}, we have $\mathbb{E}[\mathbb{I}\{\cT > \cG_{1-\alpha}\}] = \mathbb{P}(\cT > \cG_{1-\alpha}) \to \alpha$. We then verify this with various $\alpha \in \{0.05,0.10,\cdots,0.90\}$. 
For every $\alpha,$ we bootstrap $300$ times to compute the critical value $\cG_{1-\alpha}$, and repeat this whole procedure $500$ times to calculate $\hat{\mathbb{P}}(\cT > \cG_{1-\alpha}) = \frac{1}{500} \sum_{b=1}^{500} \mathbb{I}\{\cT^b > \cG_{1-\alpha}^b\}$ where $\cT^b$ and $\cG_{1-\alpha}^b$ are the pairwise score difference statistic and the corresponding bootstrap critical value of the $b$-th repetition. Note that in each repetition, the true $\theta^*$ is generated independently from Uniform$[2,4]$ as before. 
Note that $\hat{\mathbb{P}}(\cT \le \cG_{1-\alpha})$ is the empirical coverage probability for the pairwise score difference statistic $\cT$. 
Figure \ref{fig:pp_plot} gives the so-called PP-plot, which shows $\hat{\mathbb{P}}(\cT > \cG_{1-\alpha})$ against the theoretical significance level. 
From Figure  \ref{fig:pp_plot}, it the clear that the empirical probability match well with the theoretical ones. Especially, when we apply the significant level of $\alpha=0.05$, the empirical exceptional probability is indeed around $0.05$.

\subsection{Confidence Interval}\label{sec:numerical5.2}
Below we provide numerical studies for validating our proposed framework for confidence interval construction in Section~\ref{sec:ranking_inference}. Throughout this subsection, we use $n=60,M=3,L=80$ and let $p$ vary in $\{0.05,0.10,0.15\}.$  In addition, the entries of $\theta\in \RR^{60}$ are generated from the uniform grids from $4$ to $2$, i.e. $\theta_i = 4 - (i-1)/30$ for $i=1,\dots,60$. For every experiment, we conduct the bootstrap for 500 times to compute the critical value according to $\alpha=0.05$ and further repeat the entire procedure for 500 times.

\medskip
\textbf{Two-sided confidence intervals.} We construct the confidence interval (CI) for the $10$-th item ($\mathcal{M}=\{10\}$) using our method in Section~\ref{sec:ranking_inference} and the Bonferroni correction proposed in \cite{gao2021uncertainty} respectively. Concretely, we will compare the following 3 confidence intervals: (i) our bootstrap CI $[\cR_{10}^{\diamond}(\hat\sigma, \cG_{1 - \alpha}), \cR_{10}^{\sharp}(\hat\sigma, \cG_{1 - \alpha})]$ with the normalization $\hat\sigma_{mk}$, (ii) our bootstrap CI $[\cR_{10}^{\diamond}(\tilde{\eta}, \cG_{1 - \alpha}^{\diamondsuit}), \cR_{10}^{\sharp}(\tilde{\eta}, \cG_{1 - \alpha}^{\diamondsuit})]$ but with the normalization $\tilde\eta_{mk}$ ($c_0 = 1$), (iii) the Bonferroni CI $[\tilde{\cR}_{10}^{\diamond}, \tilde{\cR}_{10}^{\sharp}]$ given by \cite{gao2021uncertainty} extended to $M=3$ in this simulation (see discussions after Remark~\ref{remark4.1}). For each choice of the above confidence intervals, we report: (a) EC($\theta$) -- the empirical coverage probability $\hat{\mathbb{P}}(\cT \le \cG_{1-\alpha})$ for $\cT$, which is also the overall empirical coverage probability for all the score differences $\theta_{k}^* - \theta_{10}^*, \forall k \ne 10$, (b) EC($r$) -- the empirical coverage of the confidence interval constructed for rank $r_{10}$, and furthermore (c) Length -- the length of the CI for rank $r_{10}$ which equals $\cR_{10}^{\sharp} - \cR_{10}^{\diamond}$, where $[\cR_{10}^{\diamond}, \cR_{10}^{\sharp}]$ is any of the above CI in (i)-(iii).
  
    \begin{table}[t]
	\begin{center}
		\def\arraystretch{1.2}
		\setlength\tabcolsep{4pt}
		\begin{tabular}{c||c|c|c||c|c|c||c|c|c}
			\toprule
			CI & \multicolumn{3}{c||}{$[\cR_{10}^{\diamond}(\hat\sigma, \cG_{1 - \alpha}), \cR_{10}^{\sharp}(\hat\sigma, \cG_{1 - \alpha})]$} & \multicolumn{3}{c||}{$[\cR_{10}^{\diamond}(\tilde{\eta}, \cG_{1 - \alpha}^{\diamondsuit}), \cR_{10}^{\sharp}(\tilde{\eta}, \cG_{1 - \alpha}^{\diamondsuit})]$} & \multicolumn{3}{c}{$[\tilde{\cR}_{10}^{\diamond}, \tilde{\cR}_{10}^{\sharp}]$} \\
            \hline
			  & EC($\theta$) & EC($r$) & Length  & EC($\theta$) & EC($r$) & Length & EC($\theta$) & EC($r$) & Length \\
			\hline
			$p=0.05$	&0.950
			 &1.000 &5.590 &0.948
			 &1.000 &5.716  &1.000 &1.000 &10.290 \\
			$p=0.10$	&0.940 &1.000 & 3.604
			 &0.956 &1.000  & 3.686
			 &1.000 &1.000 &6.962 \\
			$p=0.15$	  &0.950 &1.000  &2.886   &0.948 &1.000 &2.928  &1.000 &1.000 &5.476 \\
			\bottomrule
		\end{tabular}
	\end{center}
	\caption{Empirical coverages and lengths of two-sided confidence intervals (CI). For any choice of CI, ``EC($\theta$)'' and ``EC($r$)'' denote the empirical coverages of confidence intervals for score differences and ranks. ``Length'' denotes the length of the CI for $r_{10}$. The first two CI's use our bootstrap method in Section~\ref{sec:ranking_inference} and only differ in the normalization parameters used. The third CI is based on the Bonferroni method of \cite{gao2021uncertainty}. 
	All numbers are averaged over 500 replications.}
	\label{tab1}
\end{table}

The results of the empirical coverages for score differences and ranks and the length of confidence intervals are summarized in Table \ref{tab1}. Table \ref{tab1} reveals that the empirical coverage probability for score differences, no matter which normalization it uses, is approximately $0.95$, which is consistent with our theory in Section~\ref{sec:ranking_inference}. 
However, the CI for the rank of the $10$-th item is more conservative since the rank must be an integer, so we see the empirical coverage of the confidence intervals for the rank stays at one for all cases. In comparison, the confidence interval via \cite{gao2021uncertainty}'s method is even more conservative as the empirical coverages of the confidence intervals for $\theta$ are already one and the lengths of their confidence intervals are much wider than ours.

\begin{table}[t]
	\begin{center}
		\def\arraystretch{1.2}
		\setlength\tabcolsep{4pt}
		\begin{tabular}{c||c|c|c||c|c|c|c|c}
			\toprule
			 & \multicolumn{3}{c||}{Null holds: $r_m \le K$} & \multicolumn{5}{c}{Alternative holds: $r_m > K$} \\
			 \hline
			$m$ & $K-2$ & $K-1$ & $K$ & $K+1$ & $K+2$ & $K+3$ &$K+4$ & $K+5$\\
			$\theta_{m}^*-\theta_{K}^*$ & $2/30$ & $1/30$ & $0$ & $-1/30$ & $-2/30$ & $-3/30$ &$-4/30$ & $-5/30$\\
			\hline 
			$p=0.05$	&0 (0.036)
			 &0 (0.038) &0 (0.050)  &0.008 &0.144 &0.444 &0.822 &0.986\\
			$p=0.10$	&0 (0.043) &0 (0.045)  & 0 (0.054)
			 &0.032 &0.372 &0.896 &0.993 &1\\
			$p=0.15$	  &0 (0.042)  &0 (0.038) &0 (0.046) &0.094 &0.624 &0.984 &1 &1\\
			\bottomrule
		\end{tabular}
	\end{center}
	\caption{Sizes and powers of the test $\psi_{m, K} = \mathbb{I}\{\cR_{m}^{\diamond} > K\}$ for testing hypothesis in \eqref{eq_top_K_test}. The numbers inside the bracket indicate the sizes of testing score differences as we discussed in the text. The numbers outside the bracket represent the sizes or powers of directing testing the rank, which is our goal here.  All displayed numbers are averaged over 500 replications. }
	\label{tab2}
\end{table}

\medskip
\textbf{One-sided confidence intervals.} Next we validate the testing performance of the test $\psi_{m, K} = \mathbb{I}\{\cR_{m}^{\diamond} > K\}$ for \eqref{eq_top_K_test}. 
In this experiment, we choose $K=10$ in \eqref{eq_top_K_test} and by default we would like to use the normalization parameter $\hat\sigma_{mk}$ in constructing $\cR_{m}^{\diamond}$ as in \eqref{eq_one_sided_confidence_interval}. Consider $m \in \{K-2,\cdots,K+5\}$.  We computed the proportion of rejection for each given $m$.   If $m \le K$ the null hypothesis is true and the proportion is approximately the sizes of the test, whereas if the alternative is true, the proportion is approximately the power of the test. In addition, when the null holds, we also calculated the size of the test $\mathbb{I}\{\max_{k \neq m} {\sqrt{L}\{\hat{\theta}_{k} - \hat{\theta}_{m} - (\theta_{k}^{*} - \theta_{m}^{*})\}}/{\hat{\sigma}_{mk}} > \mathcal{G}_{1 - \alpha}^{\circ}\}$ for the one-sided hypotheis testing problem $H_0: \theta_{m}^* - \theta_k^* \ge (k-m)/30$ for all $k \ne m$ for testing score differences.

The results are presented in Table~\ref{tab2}. We observe from this table that when the alternative holds and true rank increases, the power of our test rise rapidly to $1$. On the other hand, when the null hypothesis holds, we control the test size around $\alpha=0.05$ if we are testing the score differences and the rank test becomes more conservative and get size equal to zero, which can be a good feature in practice. 

\begin{table}[t]
	\begin{center}
		\def\arraystretch{1.2}
		\setlength\tabcolsep{4pt}
		\begin{tabular}{c|c|c|c|c|c}
		\toprule
		 &EC($\theta$) & EC($r$) & $K=5$ & $K=10$ & $K=15$  \\
			\hline 
			 			$p=0.05$ &0.964 &1.000	
			 &8.81  &13.82  &18.89  \\
			$p=0.10$	&0.966 &1.000
			 &7.55  &12.49  &17.54 \\
			$p=0.15$	 &0.962 &1.000 &6.98  
			 &12.00  &17.00 \\
			\bottomrule
		\end{tabular}
	\end{center}
	\caption{Empirical coverages and lengths of sure screening confidence set for Example \ref{exp:candidate_ad}. ``EC($\theta$)'' represents the empirical coverage of confidence intervals for score differences. Note EC($\theta$) does not depend on $K$. ``EC($r$)'' denotes the empirical coverage of the confidence intervals for true top $K$ items. For all $K\in \{5,10,15\}$, we see $\textrm{EC}(r) = 1$, so we collapse them into just one column. Other numbers are the lengths of the sure screening confidence set $\hat{\cI}_{K}$.
	All displayed numbers are averaged over 500 replications. }
	\label{tab3}
\end{table}

\medskip
\textbf{Uniform one-sided confidence intervals.} In the candidate admission Example~\ref{exp:candidate_ad}, to guarantee the sure screening property, we need to build uniform one-sided coverage for all items, i.e. $\mathcal{M} = [n]$. Following Example~\ref{exp:candidate_ad}, once we have the one-sided confidence interval for all items, the sure screening set is given by $\hat{\cI}_{K} = \{1\leq m\leq n : \cR_{m}^{\diamond} \leq K\}$. In this simulation, we choose $K=\{5,10,15\}$. 
In Table~\ref{tab3}, we report the average length of $\hat{\cI}_{K}$ and empirical coverages of the confidence intervals for score differences and for true top-K ranks over 500 replications. It turns out that in the simulation the length of the sure screening confidence set is no more than $K+4$ even when the sampling probability $p$ is as small as $0.05$ since the true scores are well-separated. 
Similar to what we have previously seen, the rank empirical coverage probabilities ``EC($r$)'' are again all one for $K = 5,10,15$, indicating that the sure screening set is already conservative.
Finally, since we use $\mathcal{M} = [n]$, the empirical coverage ``EC($\theta$)'' for score differences is independent of $K$. From Table~\ref{tab3}, for different $p$, this coverage via the Gaussian multiplier bootstrap is approximately $0.95$ as expected.

\subsection{Real Data Analysis}\label{sec:numeric5.3}
In this subsection, we analyze a real dataset to corroborate the practical effectiveness of our proposed methodology and its associated theoretical guarantees. 
We choose to use the relatively simple \emph{Jester Dataset} \citep{goldberg2001eigentaste} which contains ratings for 100 jokes from $73,421$ users and is available on the website of 
https://goldberg.berkeley.edu/jester-data/. Among all the users, $14,116$ rated all $100$ jokes. Our analyses are based upon these users who rated all jokes for simplicity. 

\medskip
\textbf{Results with $M = 3$.} As for data generation, we first synthesize an uniform-hypergraph with edge sampling probability $p=0.05$ and consider the setting $M=3$, namely, any 3 jokes are chosen for comparisons with probability $p=0.05$. For any selected tuple, we randomly select $L=80$ rankings from those $14,116$ users who ranked all jokes, and observe the top rankings for the selected tuple. For CI construction, we follow our methodology discussed in Section~\ref{sec:ranking_inference}. We construct both two-sided and one-sided $1-\alpha$ confidence intervals for top-15 ranked items with $\alpha=0.05$. We summarize our inference results in Table \ref{rank_interval}.

\begin{table}[ht]
	\begin{center}
		\begin{tabular}[t]{ p{1.5cm}|p{1.5cm} p{1.5cm}| p{1.5cm}p{1.5cm}p{1.5cm}|p{1.5cm}p{1.5cm} }
			\toprule
			Rank &Joke ID & Score & $TC_1$ & $TC_2$ & $TC_3$ & $OC$ &$UOC$ \\ \hline
			1 &89 &0.89 &[1,2] &[1,2] &[1,2] &[1,100] &[1,100]  \\ 
			2  &50 &0.85 &[1,2] &[1,2] &[1,2] &[1,100] &[1,100]  \\ 
			3  &27 &0.73 &[3,8] &[3,8] &[3,9] &[3,100] &[3,100]  \\ 
			4 &36 &0.69 &[3,8] &[3,8] &[3,10] &[3,100]  &[3,100] \\ 
			5  &35 &0.69 &[3,9] &[3,9] &[3,10] &[3,100] &[3,100]  \\ 
			6  &29 &0.67 &[3,9] &[3,9] &[3,11] &[3,100] &[3,100] \\ 
			7  &32 &0.67 &[3,9] &[3,9] &[3,11] &[4,100]  &[3,100] \\ 
			8  &62 &0.66 &[3,10] &[3,10] &[3,11] &[4,100] &[3,100] \\ 
			9  &54 &0.64 &[4,11] &[5,11] &[3,14] &[5,100] &[5,100] \\ 
			10  &53 &0.60 &[8,12] &[8,12] &[4,16] &[9,100] &[7,100]  \\ 
			11  &49 &0.57 &[9,15] &[9,15] &[6,16] &[10,100] &[9,100]  \\ 
			12  &68 &0.54 &[10,16] &[10,16] &[9,21] &[11,100]  &[9,100]\\ 
			13  &72 &0.52 &[11,16] &[11,16] &[9,21] &[11,100]  &[10,100] \\ 
			14  &66 &0.52 &[11,16] &[11,16] &[9,21] &[11,100] &[10,100]  \\ 
			15  &69 &0.51 &[11,16] &[11,16] &[10,21] &[11,100] &[10,100] \\ 
			\bottomrule
		\end{tabular}
		\label{twoquantities}
	\end{center}
	\caption{Confidence intervals for ranks of jokes in the \it{Jester Dataset}. The columns \emph{Rank}, \emph{Score} and \emph{Joke ID} denote the estimated rank, estimated MLE score and the corresponding joke IDs, respectively. $TC_1,TC_2$ denote our bootstrap two-sided confidence intervals constructed using $\hat\sigma_{mk}$  and $\tilde{\eta}_{mk}$ ($c_0=1$) as the normalization parameters. $TC_3$ is constructed based on the Bonferroni method of \cite{gao2021uncertainty}. Furthermore, $OC$ represents the one-sided confidence intervals for each individual item and $UOC$ denotes the uniform one-sided confidence intervals for all items together, which are used to generate the sure screening confidence set in Example~\ref{exp:candidate_ad}. }
	\label{rank_interval}
\end{table}

From Table \ref{rank_interval}, we observe that our bootstrap method using $\hat\sigma_{mk}$ ($TC_1$) or $\tilde{\eta}_{mk}$ ($TC_2$) does not make too much difference. However, it is worth mentioning that our confidence intervals constructed using either of these two normalization parameters are strictly better than the confidence interval constructed via the Bonferroni method ($TC_3$) of \cite{gao2021uncertainty}. Furthermore, we also build the one-sided confidence intervals for each individual item, denotes as $OC$, and the uniform one-sided confidence intervals for all items together, denoted as $UOC$, which is wider than $OC$ due to the overall control. $OC$ can be used to conduct the hypothesis testing in \eqref{eq_top_K_test}. For example, if we care to test whether a joke is within the top-$3$ funniest in this real data, we will reject the hypothesis from the $7$-th ranked item. If we test whether a joke is within the top-$10$ best, we will reject the hypothesis from the $12$-th ranked item. $UOC$ can be used to generate the sure screening confidence set in Example~\ref{exp:candidate_ad}. For example, a set that contains all the top-$5$ jokes with high probability should include the first $9$ jokes in total. 

\begin{table}[t]
	\begin{center}
		\begin{tabular}[t]{ p{1.5cm} | p{1.5cm}p{1.5cm}p{1.5cm}p{1.5cm}p{1.5cm} p{1.5cm}}
			\toprule
			 Joke ID  & $M=2$ & $M=3$ & $M=4$ & $M=5$ &$M=6$ \\ \hline
			 10  &[24,51] &[23,50] &[26,48] &[21,45] &[24,43]  \\ 
			  30 &[63,89] &[78,94] &[80,97] &[79,95] &[84,98]  \\ 
			  50  &[1,4] &[1,4] &[1,4] &[1,4] &[2,4]  \\ 
			 70  &[52,81] &[63,79] &[63,80] &[67,81]  &[76,89] \\ 
			  90  &[47,74] &[40,66] &[43,70] &[41,68] &[38,64]  \\
			\bottomrule
		\end{tabular}
		\label{twoquantities}
	\end{center}
	\caption{Two-sided confidence intervals constructed for jokes with ID in $\{10,30,50,70,90\}$. We choose the sampling probability $p_M$  such that ${M}/[{\binom{n-1}{M-1}p_M}]$ is a fixed constant for $M\in \{2,\cdots 6\}$. }
	\label{tab_M}
\end{table}

\medskip
\textbf{Results with different $M$'s.} According to our asymptotic distribution, the asymptotic variance is of the order ${M}/{[\binom{n-1}{M-1}p_M]}$ when we assume $\{\theta_i\}_{i\in[n]}$ are in the same order. Now we let $M$ vary and choose $p_M$ such that ${M}/{[\binom{n-1}{M-1}p_M]}$ is a fixed number. Specifically, we fix $p_2=0.3$, we will have the following $\{p_M\}_{M\in \{3,\cdots 6\}}$: $p_3=9\times 10^{-3},p_4=3.7\times 10^{-4},p_5=1.9\times 10^{-5},p_6=1.2\times 10^{-6}$. We summarize the two-sided confidence intervals for jokes with ID in $\{10,30,50,70,90\}$ for each $M$ in Table \ref{tab_M}.
We observe from Table \ref{tab_M}, when we increase $M$ but keep the same ${M}/{\binom{n-1}{M-1}p_M}$, we still obtain confidence intervals with comparable length for any given item. We also observe that it allows a much smaller sampling probability $p_M$ when $M$ is large to construct confidence intervals of the same significance level.

\begin{table}[t]
	\begin{center}
		\begin{tabular}[t]{ p{1.5cm} | p{1.5cm}p{1.5cm}|p{1.5cm}p{1.5cm}|p{1.5cm} p{1.5cm}|p{1.5cm} p{1.5cm}}
			\toprule
			& \multicolumn{2}{c|}{$p_{2,3}=1\times 10^{-1}$} & \multicolumn{2}{c|}{$p_{3,4}=3\times 10^{-3}$}& \multicolumn{2}{c|}{$p_{4,5}=1.2\times 10^{-4}$}& \multicolumn{2}{c}{$p_{5,6}=6\times 10^{-5}$}\\
			\hline
			 Joke ID  & $M=2$ & $M=3$ & $M=3$ & $M=4$ &$M=4$ &$M=5$ &$M=5$ &$M=6$ \\ \hline
			 10  &[12,69] &[33,39] &[17,63] &[28,45] &[15,50] &[29,36]  &[15,47] &[27,36]\\ 
			  30 &[54,98] &[83,87] &[66,99] &[86,95] &[78,99] &[88,94] &[78,99] &[88,97]\\ 
			  50  &[1,10] &[1,2] &[1,14] &[2,4] &[1,5]  &[2,2] &[1,7] &[2,2]\\ 
			 70  &[45,92] &[68,75] &[50,90] &[73,81]  &[51,87] &[76, 79]&[61,97] &[77,80]\\ 
			  90  &[32,88] &[55,63] &[40,84] &[55,68] &[35,77] &[51,63] &[31,72] &[52,64] \\
			\bottomrule
		\end{tabular}
		\label{twoquantities}
	\end{center}
	\caption{Two-sided confidence intervals constructed for jokes with ID in $\{10,30,50,70,90\}$. We compare the confidence intervals for every two adjacent $M$ under the same sampling probability. }
	\label{tab_M2}
\end{table}

As noted before, for each given sampling probability $p$, the effective number of samples is very different for different $M$.  Therefore, we compare the inference results  only for  the adjacent $M$ and $M+1$ with the same $p$, denoted as $p_{M, M+1}$.  Specifically, we pre-select 5  jokes and compute their  two-sided confidence intervals based on $M$-way and $(M+1)$-way  comparisons with a fixed sampling probability $p_{M,M+1},M\in \{2,3,4,5\}$.     The results are presented in Table \ref{tab_M2}.
We observe that for a fixed $p_{M,M+1},$ the two-sided confidence intervals with $M+1$ are much narrower than those with $M$. Moreover, for a given $M$, if we increase $p$, the lengths of confidence intervals also become much smaller.   Both of these conclusions are due to the increase of sample size in both scenarios.

\section{Conclusion and Discussion}\label{sec:discussion}
This paper studies the ranking inference problem based on multiway comparisons. Unlike the conventional Plackett-Luce model \citep{plackett1975analysis}, which models the entire multiway rankings, we considered the more general case of only observing the top choices. 
Such a model serves as an extension of the famous Bradley-Terry-Luce model and modifies the Plackett-Luce model in a useful and practical direction. Theoretically, under the sparsest uniform sampling regime, we proposed to estimate the underlying preference scores via the MLE and established its \emph{optimal} $\ell_2$- and $\ell_{\infty}$- statistical rates. This closed the gap of achieving the optimal convergence with a practical algorithm under the sparest comparison hypergraph. 
Moreover, little has been done to quantify the asymptotic uncertainty of an estimator for the multiway comparisons. To our best knowledge, our work is the first to derive and justify the asymptotic distribution of the MLE for the underlying preference scores in the top-choice multiway comparison model. We should emphasize again that our theoretical contributions are highly nontrivial as the justification for general $M$ is quite mathematically involved. More importantly, we proposed a novel inference framework for building confidence intervals for ranks, which are provably narrower than the confidence intervals with high-probability Bonferroni correction in \cite{gao2021uncertainty}. This framework is valuable in solving outstanding inference questions, including testing top-$K$ placement and constructing sure screening confidence sets. 

There are a few future directions to improve our work further. Firstly, we studied the ranking problem based on a uniform comparison hypergraph. That is, each comparison is made among $M$ items. It would be interesting to consider the mixed-size choice set where one may observe different numbers of items for each comparison. The Plackett-Luce model can be viewed as choosing the best item from $M$ items and then choosing the best from the remaining $M-1$ items and so on. For general mixed-size comparisons, some analyses need to be modified, and it is interesting to study whether the ranking inference results in this paper can be generalized. 
Secondly, although the convergence optimality and asymptotic normality results require no assumption on the number of comparisons $L$, when we studied the ranking inference in Section~\ref{sec:ranking_inference}, we needed $L$ to satisfy $L\gtrsim \textrm{poly}(\log n)$ in order to establish the theoretical guarantee for the Gaussian multiplier bootstrap. It remains open whether we can further relax the condition to $L\gtrsim 1$ or even allow $L=1$ to conduct effective ranking inferences. Thirdly, it would be interesting to see if 
some covariate information can be added to the analysis. In reality, the ranking is sometimes conducted together with item features or expert opinions. It is another exciting topic to study how we may incorporate these pieces of additional information into ranking inferences. 
Lastly, the time-varying effect of ranks may also be worth further investigation regarding time series of ranks. Over time, we may see underlying scores jump to a different level. Ranking inferences to detect the change point is another promising direction. Overall, we still see many challenges in inference for ranks under various settings, which calls for more research on ranking inference methodologies.

\begin{singlespace}
\small
\bibliographystyle{abbrvnat}
\bibliography{dynamic}

\end{singlespace}
\newpage 

\end{document}